\documentclass[12pt]{article}

\usepackage[hmargin=2.32cm, vmargin={3cm,3cm}, a4paper]{geometry}\usepackage{amsmath,amssymb,amsthm,mathrsfs}
\usepackage{epsfig,epsf,subfigure,graphicx,graphics}
\usepackage{url,enumerate}
\graphicspath{{fig/}}
\usepackage{float}
\usepackage{parskip}
\usepackage{fancyhdr}
\usepackage[dvipsnames]{xcolor}
\usepackage{physics}

\DeclareMathAlphabet\mathrsfso      {U}{rsfso}{m}{n}
\usepackage{amssymb}\usepackage{mathtools}
\usepackage[toc,page]{appendix}
\usepackage{bbm}

\usepackage{lettrine}
\usepackage{hyperref}

\newtheorem{lemma}{Lemma}
\newtheorem*{lemma*}{Lemma}
\newtheorem*{proposition*}{Proposition}

\newtheorem{definition}{Definition}
\newtheorem{proposition}{Proposition}
\newtheorem{theorem}{Theorem}
\newtheorem*{theorem*}{Theorem}

\begin{document}

\title{Analysis of the sample complexity for PAC-learning functions defined over quantum states}

\author{Jordi Pérez-Guijarro}
\date{}

\maketitle

\abstract{A fundamental question in PAC learning is determining the number of labeled examples required to learn a concept class to a desired accuracy and confidence. In classical learning theory, this quantity is characterized by the VC-dimension, while several quantum generalizations have established analogous results when examples are provided in quantum superposition. In this work, we study a distinct quantum PAC-learning model in which concepts are functions acting on quantum states. We demonstrate that the VC-dimension, although still relevant, fails to fully capture the sample complexity of this model. To further characterize this setting, we develop a new lower bound on the required number of samples and establish an upper bound when the states in the domain are linearly independent. Remarkably, this upper bound has a form similar to the classical PAC-learning bound. We further examine a setting in which the learner receives more informative data and show that the limitations of the VC-dimension persist in this extended model.}

\section{Introduction}

The Probably Approximately Correct (PAC) learning framework, introduced by Valiant \cite{valiant1984theory}, is widely used for analyzing learning problems. In this framework, examples are drawn independently from an unknown probability distribution. Given a finite set of labeled examples, the learner is required to produce a hypothesis that, with probability at least $1-\delta$, has generalization error at most $\epsilon$. This setting has been extended in several directions, including the study of computationally efficient learning algorithms \cite{blumer1989learnability}, learning in the presence of noise \cite{angluin1988learning}, and quantum generalizations \cite{arunachalam2017guest,bshouty1995learning}.

Independently of the variant studied, one of the main quantities of focus is the sample complexity. This quantity determines how many examples are required to learn a concept class $\mathcal{C}$. In particular, the sample complexity for the original framework is given by $t_{\mathcal{C}}(\epsilon,\delta)=\Theta\left(\frac{d}{\epsilon} + \frac{\ln(1/\delta)}{\epsilon}\right)$, where $d$ denotes the VC-dimension of the concept class $\mathcal{C}$.

This characterization emerges from a progressive refinement of bounds over time. Specifically, in \cite{blumer1989learnability}, a lower bound of the form $\Omega\left(d + \frac{\ln(1/\delta)}{\epsilon}\right)$ was established, which was shortly after improved to the final value in \cite{ehrenfeucht1989general}. On the upper bound side, \cite{vapnik2006estimation,blumer1989learnability} show that any consistent learner achieves $O\left(\frac{d\ln(1/\epsilon)}{\epsilon} + \frac{\ln(1/\delta)}{\epsilon}\right)$, which is tight up to a $\ln(1/\epsilon)$ factor. This gap was later reduced by \cite{simon2015almost}, using a majority vote over classifiers. Finally, using similar techniques, \cite{hanneke2016optimal} was able to close the gap completely.

Similarly, for the quantum generalization of the PAC setting introduced in \cite{bshouty1995learning}, where instead of classical pairs $(x,f(x))$ a superposition of states $\ket{x}\ket{f(x)}$ is used, the sample complexity coincides with the classical one \cite{arunachalam2018optimal}. That is, in both cases, the VC-dimension uniquely determines the sample complexity. However, when restricted to efficient learners, as shown in \cite{servedio2004equivalences}, there are cases where the quantum sample complexity is strictly smaller than the classical sample complexity. Furthermore, if we allow access to both the unitary $U$ that generates the quantum examples and $U^\dagger$, then a quantum separation in sample complexity exists without the need to restrict to efficient learners \cite{salmon2023provable}.

In this work, we study another quantum generalization in which functions are defined over a Hilbert space and examples are given by pairs $(\ket{\phi}, f(\ket{\phi}))$. This setting was introduced in \cite{aimeur2006machine,gambs2008quantum}, where several algorithms are proposed. The specific case in which the input space is restricted to only two possible quantum states has been studied in \cite{guctua2010quantum,sentis2012quantum} within similar frameworks. This generalization is closely related to the problem of learning a measurement that best fits the data \cite{heidari2021theoretical,heidari2024new,heidari2023learning}. The key distinction is that, in this setting, the objective is to learn a function that approximates $f$, without requiring it to correspond to a physically implementable measurement.

The rest of the paper is organized as follows. Section \ref{notation} introduces the notation and framework used. Section \ref{VC_dimension_insufficiency} shows that the VC-dimension is not sufficient to characterize the sample complexity in this setting. Next, in Section \ref{Bounds_on_sample_complexity} and Section \ref{upper_bound_section}, we derive and analyze upper and lower bounds on the sample complexity. Section \ref{discussion_examples} considers a variant of the setting in which the examples are of the form $(\ket{\phi}^{\otimes k}, f(\ket{\phi}))$. Finally, Section \ref{conclusions} presents the conclusions of this article.

\section{Notation and Definitions}
\label{notation}

We start by introducing the notation. Let $[n] := \{1, \cdots, n\}$, and let $2^{S}$ denote the power set of a set $S$, i.e., the set of all subsets of $S$. In the PAC setting, we consider the problem of learning a concept $c : \mathcal{X} \rightarrow \{0,1\}$ belonging to some concept class $\mathcal{C}$, where $\mathcal{X}$ denotes the instance space. Importantly, in some parts of the paper, we identify a concept $c$ with a subset of $\mathcal{X}$. In particular, this subset consists of the inputs on which the function equals $1$ and completely characterizes the function. To perform this learning task, we have access to an oracle that outputs pairs $(X,c(X))$, where $X\sim D$, for an unknown concept $c$ and unknown distribution $D$. The objective is to output a hypothesis $h:\mathcal{X}\rightarrow\{0,1\}$ such that the error
\begin{equation}
    \mathrm{dist}_D(h,c):=\mathbb{P}_{X\sim D} (h(X)\neq c(X))
\end{equation}
is small. The hypothesis $h$ is the output of a learning algorithm $\mathcal{A}$ whose input is a training set $\{(x_i,c(x_i))\}_{i=1}^t$.

We say that $\mathcal{A}$ is an $(\epsilon,\delta)$-PAC learner for a concept class $\mathcal{C}$ with sample complexity $t(\epsilon,\delta)$ if, for every concept $c\in\mathcal{C}$ and every distribution $D$ over $\mathcal{X}$, the hypothesis $h=\mathcal{A}(\{(x_i,c(x_i))\}_{i=1}^{t(\epsilon,\delta)})$ where the values $\{x_i\}_i$ are i.i.d.\ realizations of the random variable $X\sim D$, satisfies
\begin{equation}
   \mathbb{P}_{X\sim D}\bigl(h(X)\neq c(X)\bigr)\leq \epsilon.
\end{equation}
with probability at least $1-\delta$ over the internal randomness of the algorithm $\mathcal{A}$ and the random samples $\{(x_i,c(x_i))\}_{i=1}^{t(\epsilon,\delta)}$. The minimum value of $t(\epsilon,\delta)$ over all learning algorithms is called the sample complexity of the concept class $\mathcal{C}$, and we denote it by $t_{\mathcal{C}}(\epsilon,\delta)$. If the concept class is not $(\epsilon,\delta)$-PAC learnable, then $t_{\mathcal{C}}(\epsilon,\delta)$ is defined to be equal to $\infty$. 

Importantly, whether a concept class is PAC-learnable is completely characterized by its VC-dimension $d$. The VC-dimension is the maximum cardinality of a set $\{x_i\}_{i=1}^k$ such that, for every function $g:\{x_i\}_{i=1}^k\rightarrow\{0,1\}$, there exists a concept $c\in\mathcal{C}$ satisfying $c(x_i)=g(x_i)$ for all $i\in[k]$. A set $\{x_1,\ldots,x_k\}$ satisfying this property is said to be shattered by the concept class $\mathcal{C}$. Moreover, the sample complexity is also characterized by the VC-dimension. Specifically, for $\epsilon\leq 1/8$ and $\delta\leq 1/100$,
\begin{equation}\label{sample_complexity_classical}
    c_L\left(\frac{d+\ln\frac{1}{\delta}}{\epsilon} \right) \leq t_{\mathcal{C}}(\epsilon,\delta)\leq  c_U \left(\frac{d+\ln\frac{1}{\delta}}{\epsilon} \right)
\end{equation}
for some constants $c_L$ and $c_U$ that are independent of the concept class $\mathcal{C}$.

Now, we introduce the quantum generalization of the PAC learning framework studied in this work. Let $\mathcal{X}$ consist of quantum states $\ket{\psi}$ belonging to some Hilbert space $\mathcal{H}$. As before, we have access to an oracle that outputs examples. In this case, however, the examples are of the form $(\ket{\phi},c(\ket{\phi}))$, where $\ket{\phi}$ is drawn according to some unknown distribution $D$.

We say that a learning algorithm $\mathcal{A}$ is an $(\epsilon,\delta)$-PAC learner for a concept class $\mathcal{C}$ using quantum examples with sample complexity $t(\epsilon,\delta)$ if, for every concept $c\in\mathcal{C}$ and every distribution $D$ over $\mathcal{X}$, the hypothesis $h=\mathcal{A}\left(\{(\ket{\phi_i},c(\ket{\phi_i}))\}_{i=1}^{t(\epsilon,\delta)}\right)$
satisfies
\begin{equation}
   \mathbb{P}_{\ket{\psi}\sim D}\!\left(h(\ket{\psi})\neq c(\ket{\psi})\right)\leq \epsilon,
\end{equation}
with probability at least $1-\delta$ over the internal randomness of the algorithm, the random samples $\{(\ket{\phi_i},c(\ket{\phi_i}))\}_{i=1}^{t(\epsilon,\delta)}$, and the outcomes of the measurements. Similarly, we define the sample complexity when quantum examples are used as the minimum value of $t(\epsilon,\delta)$ over all learning algorithms, which we denote by $t_{Q,\mathcal{C}}(\epsilon,\delta)$.

Note that, in general, the classical setting can be seen as a special case of this framework, where the quantum instance space consists of a set of orthonormal states $\ket{x}$, each of which is associated with an element $x$ of the original instance space $\mathcal{X}$.

Finally, we introduce an argument that is used at several points in the article. In particular, we observe that the quantum sample complexity $t_{Q,\mathcal{C}}(\epsilon,\delta)$ can be lower bounded by a simple argument. Namely, consider an oracle that outputs pairs $(\mathrm{des}(\ket{\phi}),c(\ket{\phi}))$, where $\mathrm{des}(\ket{\phi})$ denotes a classical description of the quantum state $\ket{\phi}$. This is a classical oracle, and therefore the sample complexity is lower bounded by the expression in \eqref{sample_complexity_classical}. Next, using that the classical oracle is more informative than the quantum oracle, since $(\mathrm{des}(\ket{\phi}),c(\ket{\phi}))$ allows one to recover $(\ket{\phi},c(\ket{\phi}))$, we conclude that the lower bound also applies to the quantum sample complexity, i.e.,
\begin{equation}
    c_L\left(\frac{d+\ln\frac{1}{\delta}}{\epsilon} \right)\leq t_{Q,\mathcal{C}}(\epsilon,\delta)
\end{equation}
for $\epsilon\leq 1/8$ and $\delta\leq 1/100$. Consequently, only concept classes with finite VC-dimension are learnable using quantum examples.

\section{Insufficiency of VC Dimension for Characterizing Quantum Sample Complexity}
\label{VC_dimension_insufficiency}

In this first section, we address the question of whether, in this generalized setting, the VC-dimension also characterizes the sample complexity. Specifically, we find that, in general, the VC-dimension is not sufficient to completely characterize the quantum sample complexity. To obtain this result, we need to introduce a new concept:

\begin{definition}
    A pair of quantum states $\{\ket{\psi},\ket{\psi'}\}$ is an opposite pair with respect to a concept class $\mathcal{C}$ if there exist two concepts $c,c'\in \mathcal{C}$ such that $c(\ket{\psi})=c'( \ket{\psi'})=1$ and $c(\ket{\psi'})=c'( \ket{\psi})=0$.
\end{definition}

The existence of an opposite pair in a concept class is a rather mild condition, since any class with VC-dimension at least two satisfies it. Now that we have introduced the required concepts, we can formally state the result that the VC dimension is not sufficient to characterize the sample complexity.
\begin{proposition}\label{first_lower_bound}
If $\mathcal{C}$ has at least a pair of opposite states, then the quantum sample complexity $t_{Q,\mathcal{C}}(\epsilon,\delta)$ has to satisfy, for $\epsilon \leq 1/8$ and $\delta \leq 1/100$:
    \begin{equation}
        c_L\left(\frac{d}{\epsilon} + \ln\frac{1}{\delta} \left(\frac{1}{\epsilon} + \frac{1}{-\ln F}\right)\right) \leq t_{Q,\mathcal{C}}(\epsilon,\delta)
    \end{equation}
for some constant $c_L$ independent of $\mathcal{C}$, $F$ is the supremum of $|\bra{\psi}\ket{\psi'}|^2$ over all opposite pairs of quantum states $\{\ket{\psi}, \ket{\psi'}\}$ with respect to $\mathcal{C}$.
\end{proposition}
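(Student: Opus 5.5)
The bound has two parts. The terms $c_L\bigl(\frac{d}{\epsilon}+\frac{\ln(1/\delta)}{\epsilon}\bigr)$ are already given by the classical lower bound \eqref{sample_complexity_classical} via the description-oracle argument at the end of Section \ref{notation}. Since a sum of two nonnegative quantities is at most twice their maximum, it suffices to prove separately that $t_{Q,\mathcal{C}}(\epsilon,\delta)\geq c\,\frac{\ln(1/\delta)}{-\ln F}$ for some absolute constant $c$. Adjusting $c_L$ then gives the combined statement. If $F=0$ the new term vanishes, so we may assume $F>0$.

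For the new term, fix an opposite pair $\{\ket{\psi},\ket{\psi'}\}$ with concepts $c,c'$ as in the definition, and write $f=|\bra{\psi}\ket{\psi'}|^2$.

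\textbf{Hard instance.} Take $D$ to be the distribution that is uniform on $\{\ket{\psi},\ket{\psi'}\}$. Every sample then consists of one of the two states, and under $c$ its label is determined by the state in one way, while under $c'$ it is determined the opposite way. Any hypothesis $h$ with error $\le\epsilon<1/2$ must agree with $c$ on both points under $c$, and with $c'$ on both points under $c'$. Hence an $(\epsilon,\delta)$-learner yields a procedure that distinguishes ``concept $c$'' from ``concept $c'$'' with error probability at most $\delta$ in each case.

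\textbf{Reduction to state discrimination.} Replace the labeled data by a uniform random choice of one hypothesis, $c$ or $c'$. To avoid the labels trivially revealing which state was drawn, I would instead relabel the samples as follows. Under $c$, label $1$ comes with $\ket{\psi}$ and label $0$ with $\ket{\psi'}$; under $c'$ the pairing is reversed. So for a sample with label $b$, the state received is $\ket{\psi}$ or $\ket{\psi'}$ depending on the concept. Conditioned on the label sequence, which has the same distribution (i.i.d.\ uniform bits) under both concepts, the learner holds either $\bigotimes_i \ket{\psi_{b_i}}$ or $\bigotimes_i \ket{\psi_{1-b_i}}$.

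These two product states have fidelity $f^{t}$. By the Helstrom bound, with equal priors the minimum error probability is
\[
\tfrac12\bigl(1-\sqrt{1-f^{t}}\bigr)\ \ge\ \tfrac14 f^{t}.
\]
The learner's success criterion forces this error to be at most $\delta$. The learner is also allowed internal randomness and adaptive measurements, but these are covered by the Helstrom bound over all POVMs on the joint state. So $f^{t}\le 4\delta$, i.e.
\[
t\ \ge\ \frac{\ln(1/(4\delta))}{-\ln f}.
\]
For $\delta\le 1/100$ we have $\ln(1/(4\delta))\ge c\ln(1/\delta)$.

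\textbf{Main obstacle.} The main obstacle is that the learner must succeed for every concept, i.e.\ in the worst case, whereas Helstrom bounds the average error over equal priors. This is handled directly: if the worst-case error is at most $\delta$, then the average error is also at most $\delta$.

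A smaller subtlety is that $F$ is a supremum, which may not be attained. Since $t$ is finite, taking opposite pairs with $f\to F$ and letting $f$ approach $F$ gives the bound with $F$ in place of $f$. Combining this term with the classical terms completes the proof.
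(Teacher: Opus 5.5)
Your proposal is correct and follows essentially the same route as the paper: the uniform distribution on an opposite pair, conditioning on the labels (the paper conditions on the count $N_0$) to reduce to discriminating two product states with squared overlap $|\langle\psi|\psi'\rangle|^{2t}$, then the Helstrom bound. The differences are cosmetic. You relax $\tfrac12\bigl(1-\sqrt{1-f^{t}}\bigr)$ to $\tfrac14 f^{t}$ instead of inverting it exactly to get $\ln\frac{1}{4\delta(1-\delta)}$, and you explicitly pass from a single pair to the supremum $F$, which the paper leaves implicit.
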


\begin{proof}
    See Appendix \ref{proof_lower_bound_F}.
\end{proof}

Therefore, the lower bound implies that even if the VC-dimension is finite, if $F=1$, then the concept class is not PAC-learnable. Note that $F=1$ is equivalent to the existence of a sequence of opposite pairs $\{\ket{\psi_i}, \ket{\psi_i'}\}_{i\in \mathbb{N}}$ such that $|\bra{\psi_i}\ket{\psi_i'}|^2 \geq 1-\Delta_i$, with $\Delta_i\rightarrow 0$ as $i\rightarrow \infty$. Hence, the condition $F=1$ implies that the domain $\mathcal{X}$ is not finite.

Another interesting aspect of this theorem is that it is tight for some concept classes. In particular, this holds for the concept class of VC-dimension 2 defined over states $\{\ket{\psi_1}, \ket{\psi_2}\}$.

\begin{proposition}
    The concept class $\mathcal{C}:= 2^{\mathcal{X}}$ for $\mathcal{X}=\{\ket{\psi_1},\ket{\psi_2}\}$ has quantum sample complexity
    \begin{equation}
        t_{Q,\mathcal{C}}(\epsilon,\delta)=\Theta\left(\ln\frac{1}{\delta} \left(\frac{1}{\epsilon} + \frac{1}{-\ln F}\right)\right)
    \end{equation}
\end{proposition}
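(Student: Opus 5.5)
The lower bound is immediate from Proposition~\ref{first_lower_bound}. The class $2^{\mathcal{X}}$ on two states has VC-dimension $d=2$, and $\{\ket{\psi_1},\ket{\psi_2}\}$ is an opposite pair, so $F=|\bra{\psi_1}\ket{\psi_2}|^2$ and the bound reads $\Omega\bigl(\tfrac{1}{\epsilon}+\ln\tfrac{1}{\delta}(\tfrac{1}{\epsilon}+\tfrac{1}{-\ln F})\bigr)$. Since $\ln(1/\delta)\geq \ln 100>1$ for $\delta\leq 1/100$, the term $2/\epsilon$ is absorbed into $\ln\frac{1}{\delta}\cdot\frac{1}{\epsilon}$. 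The remaining work is to exhibit a learner with $t=O\bigl(\ln\frac{1}{\delta}(\frac{1}{\epsilon}+\frac{1}{-\ln F})\bigr)$. If $F=0$ the states are orthogonal, the problem is classical, and the bound \eqref{sample_complexity_classical} with $d=2$ gives the claim, with $1/(-\ln F)$ read as $0$. So assume $0<F<1$.

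The learner I propose works as follows. Draw $t$ examples. The labels are classical, so the learner sees, for each label value $b\in\{0,1\}$, the subsample of states carrying label $b$. If all labels are equal, say $b$, output the constant hypothesis $b$. If both labels appear and the concept is not constant, then the two states carry different labels and the concept is either $c(\ket{\psi_1})=1,c(\ket{\psi_2})=0$ or the reverse. Both states are known to the learner, since $\mathcal{X}$ is fixed. To decide between the two alternatives, take the examples labelled $1$. Under one hypothesis they are all copies of $\ket{\psi_1}$, and under the other all copies of $\ket{\psi_2}$. Perform the Helstrom measurement distinguishing $\ket{\psi_1}^{\otimes m}$ from $\ket{\psi_2}^{\otimes m}$. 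Its error probability is at most $\tfrac12\bigl(1-\sqrt{1-F^m}\bigr)\leq F^m/2$.

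The analysis splits into cases according to the unknown $D$ and $c$. Let $p=\min\{D(\ket{\psi_1}),D(\ket{\psi_2})\}$.
\begin{enumerate}[(i)]
\item If $c$ is constant, the learner always outputs the correct constant.
\item If $c$ is not constant and $p\leq\epsilon$, then any hypothesis agreeing with $c$ on the heavier state has error at most $\epsilon$. The learner fails only if it returns the wrong constant, which requires all $t$ samples to be the lighter state; this has probability at most $\epsilon^t\le\delta$. It can also fail if it sees both labels and the measurement errs, which is again controlled by the measurement bound below.
\item If $c$ is not constant and $p>\epsilon$, the probability of seeing only one state is at most $(1-\epsilon)^t\leq e^{-\epsilon t}\leq\delta/2$ once $t\geq\frac{1}{\epsilon}\ln\frac{2}{\delta}$.
\end{enumerate}
The measurement step is the delicate one, because the number $m$ of $1$-labelled samples may be small. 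I avoid this by using both labelled subsamples: the $m_1$ examples with label $1$ and the $m_0$ examples with label $0$ together form $\ket{\psi_1}^{\otimes m_1}\ket{\psi_2}^{\otimes m_0}$ versus $\ket{\psi_2}^{\otimes m_1}\ket{\psi_1}^{\otimes m_0}$. The overlap of these two product states is $F^{m_0+m_1}=F^t$, so the Helstrom error is at most $F^t/2\leq\delta/2$ once $t\geq\frac{\ln(1/\delta)}{-\ln F}$, independently of how the samples split. A union bound then gives total failure probability at most $\delta$ with $t=O\bigl(\ln\frac{1}{\delta}(\frac1\epsilon+\frac{1}{-\ln F})\bigr)$.

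The main obstacle is the step just described. Measuring only one label class would require a lower bound on $m$, costing an extra Chernoff argument and potentially an extra $1/p$ factor. Using the joint overlap $F^t$ of the full labelled sample removes this dependence. It remains to check that conditioning on the labels is legitimate, i.e.\ that the labelled sample really is one of two known product states determined only by the label sequence; this holds because $c$ is deterministic.
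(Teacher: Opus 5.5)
Your proposal is correct and follows the paper's proof. The lower bound comes from Proposition~\ref{first_lower_bound}, and the learner is the same one: output the constant hypothesis if all labels agree, otherwise run a Helstrom measurement on the full labelled sample $\ket{\psi_1}^{\otimes m_1}\ket{\psi_2}^{\otimes m_0}$ versus its swap, whose squared overlap is $F^t$ however the labels split. The remaining differences are minor: you analyse the constant-output branch by a direct case split on $\min\{D(\ket{\psi_1}),D(\ket{\psi_2})\}$ instead of citing the consistent-learner bound of Blumer et al., and in case (iii) you should count both single-state events, giving $2(1-\epsilon)^t$, which only changes constants.
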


\begin{proof}

    The lower bound is given in Proposition \ref{first_lower_bound}. For the upper bound, we specify an algorithm that achieves this scaling. The algorithm we use is the following:

    \begin{itemize}
        \item If all labels are equal to some value $b$, output the hypothesis $h = [b,b]$. Otherwise, perform a Helstrom measurement to discriminate between the states
        \begin{equation}
             |\psi_1\rangle^{\otimes N_0} \otimes |\psi_2\rangle^{\otimes t - N_0} \quad \quad \text{and} \quad |\psi_2\rangle^{\otimes N_0} \otimes |\psi_1\rangle^{\otimes t - N_0} \quad
        \end{equation}
        where $N_0$ denotes the number of states with label $0$. If the first outcome is obtained, the algorithm outputs $h = [0,1]$, otherwise, it outputs $h = [1,0]$.
    \end{itemize}
    Consider the case $h = [b,b]$ for $b \in \{0,1\}$. Since the output is consistent with the $t$ oracle calls, and using the same derivation as in Theorem 2.2 of \cite{blumer1989learnability}, it follows that for $t \geq \frac{\ln(4/\delta)}{\epsilon}$, the error is at most $\epsilon$ with probability at least $1 - \delta$.

    Now consider the case where the algorithm outputs $h = [0,1]$ or $h = [1,0]$. For $t = O\left(\frac{\ln(1/\delta')}{-\ln|\langle \psi_1 | \psi_2 \rangle|^2}\right)$, the correct hypothesis is output with probability at least $1 - \delta'$. This follows directly from the fact that the Helstrom measurement on $t$ copies has average error probability given by 
    \begin{equation}
        \frac{1}{2}\left(1- \sqrt{1- |\bra{\psi_1} \ket{\psi_2}|^{2t}}\right)
    \end{equation}
    Note that we have bounded the average probability of error for these two hypotheses by $\delta'$. This implies that the error for each hypothesis is at most $2\delta'$. Taking $\delta = 2\delta'$, the desired result follows.
\end{proof}

    These results seem to indicate that, although the quantum sample complexity is not characterized by the VC-dimension alone, it may be characterized by the pair $(d, F)$. This would imply that the difference between the classical setting and the quantum setting is merely an additional factor of $O\left(\frac{\ln(1/\delta)}{-\ln F}\right)$. However, this is not the case, as shown next.
    \begin{proposition}\label{unlernability_PAC}
        There exists a concept class $\mathcal{C}$ that is not PAC learnable using quantum examples and for which both the VC-dimension and $-\ln F$ are finite.  
    \end{proposition}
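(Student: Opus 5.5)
We will construct a concept class of VC-dimension $1$ whose opposite pairs have bounded overlap, yet which cannot be learned from a bounded number of quantum examples. The mechanism is that $F$ only controls the overlap of a \emph{single} pair of states. Distinguishing concepts may instead require telling apart two \emph{ensembles} (mixtures over many states) that are individually far apart but whose averaged density matrices nearly coincide.

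\textbf{Construction.} Take a Hilbert space of dimension $n+1$ (or an infinite direct sum of such blocks, one per $n$). Take states $\ket{\psi_j^{\pm}}=\alpha\ket{0}\pm\beta\ket{j}$ for $j\in[n]$, with fixed $\alpha,\beta$. Let the concept class contain, for each $n$ and each sign pattern $s\in\{\pm\}^n$, the concept $c_s$ that labels $1$ exactly the states $\ket{\psi_j^{s_j}}$ (within block $n$) and everything else $0$. We will then thin this class so that the VC-dimension stays finite; the simplest option is to let the concepts be the two "global" ones, $c_+$ (all plus states are $1$) and $c_-$ (all minus states are $1$), in each block.

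With only two concepts per block, the VC-dimension is at most $1$ per block. Using disjoint blocks with concepts supported on one block gives VC-dimension $O(1)$. Every opposite pair is of the form $\{\ket{\psi_j^+},\ket{\psi_j^-}\}$, with overlap $|\alpha^2-\beta^2|^2$, a fixed constant $<1$. Hence $-\ln F$ is finite. Both quantities are direct checks.

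\textbf{Lower bound.} Fix block $n$ and take $D$ uniform over $\{\ket{\psi_j^+},\ket{\psi_j^-}\}_j$. Under $c_+$ the examples labeled $1$ are plus states with uniformly random $j$, and under $c_-$ they are minus states. A learner achieving error $<1/4$ must output which of $c_\pm$ holds. We will show that the $t$-sample states under the two hypotheses are nearly indistinguishable. Conditioned on the labels, each labeled-$1$ example is the mixture
\[
\rho_\pm=\frac1n\sum_j\ket{\psi_j^\pm}\bra{\psi_j^\pm}=\alpha^2\ket0\bra0+\frac{\beta^2}{n}\sum_j\ket j\bra j\pm\frac{\alpha\beta}{n}\sum_j(\ket0\bra j+\ket j\bra0).
\]
The off-diagonal part has trace norm $O(\alpha\beta/\sqrt n)$, since $\frac1n\sum_j\ket j$ has norm $1/\sqrt n$. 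Hence $\|\rho_+-\rho_-\|_1=O(1/\sqrt n)$. The labeled-$0$ examples swap the roles of the signs symmetrically. The joint data is the tensor product of $t$ such copies, so $\|\rho_+^{\otimes t}-\rho_-^{\otimes t}\|_1\le O(t/\sqrt n)$ (fidelity bounds can give $\sqrt{t/n}$). By Helstrom, the success probability is at most $\tfrac12+O(t/\sqrt n)$. For any fixed $t$, choosing $n\gg t^2$ forces failure probability near $1/2>\delta$, so no finite $t(\epsilon,\delta)$ works and the class is not PAC learnable.

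\textbf{Main obstacle.} The delicate point is making sure the class actually has finite VC-dimension while $F$ stays bounded across all blocks together, including any accidental opposite pairs formed across blocks. Using orthogonal blocks and concepts supported on single blocks makes cross-block pairs have overlap $0$, which settles it. A second point is that the learner also knows the labels and which states are which. This is handled by noting that labels are identically distributed under both hypotheses, so conditioning on them is harmless, and the relevant states are exactly the mixtures $\rho_\pm$.
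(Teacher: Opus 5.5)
Your argument is correct, but it uses a different mechanism from the paper.

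The paper takes the finite domain $\{\ket{0},\ket{1},\ket{+},\ket{-}\}$ with $D$ uniform and the concepts $\{\ket{0},\ket{1}\}$ and $\{\ket{+},\ket{-}\}$. For these, $\sigma_D^{c_1}=\sigma_D^{c_2}=I/4$ exactly, so the examples carry no information and one fixed distribution defeats every learner.

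You have only approximate indistinguishability, $\|\sigma_D^{c_+}-\sigma_D^{c_-}\|_1=\|\rho_+-\rho_-\|_1=4\alpha\beta/\sqrt{n}$. You defeat each fixed $t$ by moving to a block with $n\gg t^2$. Consequently the infinite direct sum is essential, not optional as the parenthetical in your construction suggests. On a single block, $\|\sigma_D^{c_+}-\sigma_D^{c_-}\|_1\geq 4\alpha\beta/\sqrt{n}$ for every $D$. That two-concept class is in fact learnable with $O(n\ln(1/\delta))$ examples, e.g.\ via the label-signed empirical mean of $\sum_j(\ket{0}\bra{j}+\ket{j}\bra{0})$, whose expectation on $\ket{\psi_j^{\pm}}$ is $\pm 2\alpha\beta$.

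What each route buys:
\begin{itemize}
\item The paper's example is simpler and lives on a finite domain in $\mathbb{C}^2$. The later multi-copy discussion (Section~\ref{discussion_examples}) relies on it to show that a finite domain with finite VC-dimension does not suffice when $k=1$. Your infinite-domain class cannot play that role.
\item Your example exhibits a genuinely different phenomenon. Within each block the $2n$ projectors $\ket{\psi_j^{\pm}}\bra{\psi_j^{\pm}}$ are linearly independent, and distinct blocks are orthogonal. Hence $\sigma_D^{c}=\sigma_{D'}^{c'}$ forces $\mathrm{dist}_D(c,c')=0$. The infimum defining $\Lambda_{\mathcal{C}}$ is therefore $0$ without being attained, and non-learnability comes purely from a non-uniform family of ever harder instances. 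The price is an infinite domain and an infinite-dimensional $\mathcal{H}$.
\end{itemize}

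One correction: not every opposite pair has the form $\{\ket{\psi_j^+},\ket{\psi_j^-}\}$. Since $c_\pm$ act on whole blocks, $\{\ket{\psi_j^+},\ket{\psi_k^-}\}$ with $j\neq k$ is also an opposite pair, with squared overlap $\alpha^4$. Thus $F=\max\{\alpha^4,(\alpha^2-\beta^2)^2\}$. This still lies in $(0,1)$ for $\alpha,\beta\in(0,1)$, so your conclusion is unaffected. Indeed $-\ln F$ stays finite even for $\alpha=\beta$, where your characterization would have given $F=0$.
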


    \begin{proof}

        Let $\mathcal{X} := \{\ket{0}, \ket{1}, \ket{+}, \ket{-}\}$, and let $D$ be the uniform distribution over this domain. We show that the concept class $\mathcal{C} = 2^{\mathcal{X}}$, which contains the concepts $c_1 = \{\ket{0}, \ket{1}\}$ and $c_2 = \{\ket{+}, \ket{-}\}$, cannot be learned. Note that $F = 1/2$ and the VC dimension is $d = 4$, i.e., both quantities are finite. For the proof, we use the induced density matrix of the quantum samples for each concept,
        \begin{equation}
            \sigma_{D}^c:= \frac{1}{4} \sum_{i=1}^4 \ket{\psi_i}  \bra{\psi_i}\otimes \ket{c(\ket{\psi_i})} \bra{c(\ket{\psi_i})}
        \end{equation}
        Substituting the concepts $c_1$ and $c_2$ and the states, we obtain $\sigma_{D}^{c_1} = \sigma_{D}^{c_2} = \tau^{\otimes 2}$, where $\tau = I/2$ is the maximally mixed state. Consequently, any algorithm $\mathcal{A}$ will have the same distribution over hypotheses for both concepts $c_1$ and $c_2$. This implies that, for at least one of them, the PAC conditions are not satisfied. 
        
    \end{proof}

    Therefore, the pair $(d, F)$ is not sufficient to characterize the quantum sample complexity, leaving open which set of parameters is required for such a complete characterization.

    \section{Lower Bound on the Quantum Sample Complexity}
    \label{Bounds_on_sample_complexity}
    
    In this section, we improve the lower bound derived in Proposition \ref{first_lower_bound}. In particular, the new result incorporates the unlearnability of concept classes such as the one used in Proposition \ref{unlernability_PAC}. The bound uses some extra notation, in particular, the definition of the states
    \begin{equation}
        \sigma^{c}_{D} \;:=\;
        \mathbb{E}_{|\psi\rangle \sim D}
        \Big[\, |\psi\rangle\langle\psi| \otimes
        |c(|\psi\rangle)\rangle\langle c(|\psi\rangle)| \,\Big].
    \end{equation}
    The precise statement of the lower bound is given in Theorem \ref{theorem_lower_bound}.

    \begin{theorem}\label{theorem_lower_bound}
    Let $\mathcal{C}$ be any concept class defined over a set of quantum states. Then, for $\epsilon \leq 1/8$ and $\delta \leq 1/100$,
    \begin{equation}
        c_L\left(\frac{d}{\epsilon}+\ln\frac{1}{\delta} \left(\frac{1}{\epsilon} + \frac{1}{\Lambda_{\mathcal{C} } (3\epsilon)}\right)\right) \leq t_{Q,\mathcal{C}}(\epsilon,\delta)
    \end{equation}
    for some constant $c_L$ independent of the concept class $\mathcal{C}$. The function $\Lambda_{\mathcal{C}}$ is defined by
    \begin{equation}
        \Lambda_{\mathcal{C}}(\alpha):= \inf_{\substack{D,D',c,c':\\ \mathrm{dist}_{D,D'}(c,c')> \,\alpha \\ d_{TV}(D,D')\leq \frac{\alpha}{6}}} C_Q (\sigma_D^c, \sigma_{D'}^{c'})
    \end{equation}
    where $\mathrm{dist}_{D,D'}(c,c'):= \min\{\mathrm{dist}_{D}(c,c'),\mathrm{dist}_{D'}(c,c')\} $, $d_{TV}(D,D')$ denotes the total variation distance, and 
    \begin{equation}
        C_Q (\sigma, \rho):= -\ln \min_{s\in [0,1]} \Tr (\sigma^{s} \rho^{1-s})
    \end{equation}
    is the quantum Chernoff distance.
    \end{theorem}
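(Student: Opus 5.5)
The lower bound splits into three terms, and I would take the maximum of three separate bounds, since $\max\{a,b,c\}\geq (a+b+c)/3$ up to constants. The terms $\frac{d}{\epsilon}$ and $\frac{\ln(1/\delta)}{\epsilon}$ come directly from the classical-description argument of Section~\ref{notation}. An oracle outputting $(\mathrm{des}(\ket{\phi}),c(\ket{\phi}))$ is at least as informative as the quantum one, so the classical lower bound \eqref{sample_complexity_classical} applies to $t_{Q,\mathcal{C}}$. The new content is the term $\frac{\ln(1/\delta)}{\Lambda_{\mathcal{C}}(3\epsilon)}$. I would establish it with a reduction from PAC learning to asymmetric-free (symmetric) quantum hypothesis testing between $(\sigma_D^c)^{\otimes t}$ and $(\sigma_{D'}^{c'})^{\otimes t}$. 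I would then invoke the quantum Chernoff lower bound on the error.

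\textbf{The reduction.} Fix $D,D',c,c'$ with $\mathrm{dist}_{D,D'}(c,c')>3\epsilon$ and $d_{TV}(D,D')\leq \epsilon/2$. Run a putative $(\epsilon,\delta)$-learner $\mathcal{A}$ on $t$ samples. When the data come from $(D,c)$, the sample state is exactly $(\sigma_D^c)^{\otimes t}$, since the learner sees the samples but not which index they came from. Measuring the state and running $\mathcal{A}$ is therefore a legitimate quantum operation on $(\sigma_D^c)^{\otimes t}$, and it outputs $h$ with $\mathrm{dist}_D(h,c)\leq\epsilon$ with probability $\geq 1-\delta$.

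\textbf{The test.} Given $h$, I compute $\mathrm{dist}_D(h,c)$ and $\mathrm{dist}_{D'}(h,c')$. Both are classical quantities computable by the tester, who knows $D,D',c,c'$. I guess ``$(D,c)$'' if $\mathrm{dist}_D(h,c)\leq\epsilon$ and ``$(D',c')$'' otherwise.

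\textbf{Correctness under $(D',c')$.} With probability $\geq 1-\delta$ we have $\mathrm{dist}_{D'}(h,c')\leq \epsilon$. Then
\[
\mathrm{dist}_D(h,c') \leq \mathrm{dist}_{D'}(h,c') + 2\, d_{TV}(D,D') \leq 2\epsilon ,
\]
up to the constant in the TV convention. Combining this with the triangle inequality,
\[
\mathrm{dist}_D(h,c)\geq \mathrm{dist}_D(c,c')-\mathrm{dist}_D(h,c')>3\epsilon-2\epsilon=\epsilon,
\]
so we correctly guess $(D',c')$. The margin $3\epsilon$ versus $\epsilon/6$ gives slack here, and I would tune the decision threshold (e.g.\ compare $\mathrm{dist}_D(h,c)$ against $\tfrac{3}{2}\epsilon$) so that both error types are at most $\delta$. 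This is where I need $\min\{\mathrm{dist}_D,\mathrm{dist}_{D'}\}$ to be large, so that the test works in both directions. Hence the symmetric error probability of this test is at most $\delta$.

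\textbf{Applying the Chernoff bound.} Any two-outcome measurement on $\rho^{\otimes t}$ versus $\sigma^{\otimes t}$ has average error at least $\tfrac{1}{2}\big(1-\sqrt{1-F(\rho,\sigma)^{2t}}\big)$-type quantities. I would rather use the quantum Chernoff converse bound $P_{\mathrm{err}}\geq c\, e^{-t\, C_Q(\rho,\sigma)}$, valid for all $t$ up to a constant prefactor; the fidelity bound $P_{\mathrm{err}}\geq \tfrac{1}{4}F^{t}$ together with $C_Q\geq$ some function of $F$ goes the wrong way. Failing a clean nonasymptotic Chernoff converse, I would use the known finite-$t$ lower bound of Audenaert et al.\ / Nussbaum–Szkoła, $P_{\mathrm{err}}\geq \tfrac{1}{4}\, e^{-t\,C_Q'}$ with a comparable exponent, or bound via the pinched classical Chernoff information.

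\textbf{Optimizing.} This yields $\delta\geq c\,e^{-t C_Q(\sigma_D^c,\sigma_{D'}^{c'})}$, i.e.\ $t\geq \ln(c/\delta)/C_Q$. Taking the supremum over admissible tuples gives $t\geq \Omega(\ln(1/\delta)/\Lambda_{\mathcal{C}}(3\epsilon))$ for $\delta\leq 1/100$.

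The main obstacle is this last step. I need a single-shot (non-asymptotic) lower bound on the optimal discrimination error, exponential in $t\,C_Q$, with an absolute constant prefactor. The standard quantum Chernoff theorem is asymptotic, so I would either cite a finite-copy converse or prove one by reducing to classical distributions via the Nussbaum–Szkoła mapping. That mapping produces classical distributions whose classical Chernoff information equals $C_Q$ and whose error lower-bounds the quantum error. I would then apply the classical finite-sample Chernoff lower bound, which holds up to polynomial prefactors that can be absorbed for small $\delta$.
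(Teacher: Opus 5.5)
Your reduction matches the paper's. Running the learner on $(\sigma_D^c)^{\otimes t}$ versus $(\sigma_{D'}^{c'})^{\otimes t}$ gives a test with error at most $\delta$; your threshold test on $\mathrm{dist}_D(h,c)$ is correct, and in fact it only uses $\mathrm{dist}_D(c,c')>3\epsilon$ and $d_{TV}(D,D')\leq\epsilon/2$. The $d/\epsilon$ and $\ln(1/\delta)/\epsilon$ terms come from the classical-description argument, also as in the paper.

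The gap is the last step. You flag it as the main obstacle but do not close it, and neither route you propose works as stated.

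\begin{itemize}
\item Your preferred bound, $P_{\mathrm{err}}\geq c\,e^{-tC_Q(\rho,\sigma)}$ for all $t$ with a $t$-independent constant $c$, is false. Already for the commuting states $\rho=\mathrm{diag}(p,1-p)$ and $\sigma=\mathrm{diag}(1-p,p)$, the optimal error is a binomial tail of order $t^{-1/2}e^{-tC_Q}$.
\item Your fallback (Nussbaum--Szko\l{}a plus a sharp-exponent classical bound ``up to polynomial prefactors absorbed for small $\delta$'') has the same defect. The prefactor decays in $t$, not in $\delta$, and it is not even universal. From $\delta\geq c\,t^{-1/2}e^{-tC_Q}$ you only get $t\geq\min\{\ln(1/\delta)/(2C_Q),\,c^2/\delta\}$. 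This stops scaling like $1/C_Q$ exactly in the regime $\Lambda_{\mathcal{C}}(3\epsilon)\to 0$ that the theorem is about.
\end{itemize}

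The missing point is that the fidelity route you dismissed goes the right way, and it is what the paper uses. Let $F(\rho,\sigma)=\|\sqrt{\rho}\sqrt{\sigma}\|_1$. By Fuchs--van de Graaf and multiplicativity of fidelity,
\[
P_{\mathrm{err}}\geq \tfrac12\bigl(1-\sqrt{1-F(\rho,\sigma)^{2t}}\bigr)\geq \tfrac14\, F(\rho,\sigma)^{2t}.
\]
Moreover $F(\rho,\sigma)\geq \Tr(\rho^{1/2}\sigma^{1/2})\geq \min_{s\in[0,1]}\Tr(\rho^s\sigma^{1-s})=e^{-C_Q(\rho,\sigma)}$. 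So $C_Q\geq -\ln F$, which is the direction you need.

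This gives $\delta\geq\frac14 e^{-2tC_Q}$, hence $t\geq \ln(1/(4\delta))/(2C_Q)$ for every $t$, with no prefactor. The factor $2$ in the exponent is harmless because $c_L$ absorbs it. Taking the infimum over admissible $(D,D',c,c')$ then yields the $\ln(1/\delta)/\Lambda_{\mathcal{C}}(3\epsilon)$ term.

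If you prefer your Nussbaum--Szko\l{}a route, it also works once you replace the sharp classical exponent by the prefactor-free bound $\sum_x\min\{p(x),q(x)\}\geq\frac12\bigl(\sum_x\sqrt{p(x)q(x)}\bigr)^2$. This again costs only a factor $2$ in the exponent.
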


    \begin{proof}
        See Appendix \ref{appendix_lower_bound}.
    \end{proof}

    Note that, in the scenario described in Proposition \ref{unlernability_PAC}, if there exist two concepts such that $\mathrm{dist}_{D}(c,c')=\alpha_0$ and $\sigma_{D}^{c}=\sigma_{D}^{c'}$ under some distribution $D$, then $\Lambda_{\mathcal{C}}(\alpha)=0$ for all $\alpha\leq\alpha_0$. This implies that, for all $\epsilon<\alpha_0/3$, the concept class is not PAC learnable, or equivalently, its quantum sample complexity is infinite. 

    The main difference between this lower bound and the one introduced in Proposition \ref{first_lower_bound} is the appearance of the function $\Lambda_{\mathcal{C}}(\alpha)$ in place of the constant $-\ln F$. This naturally raises the question of how these two quantities are related. In fact, they are closely related. For instance, this relationship can be seen by considering $\mathcal{C}:=2^{\mathcal{X}}$, where $\mathcal{X}=\{\ket{\psi_1},\ket{\psi_2}\}$.

    \begin{proposition}\label{proposition_example}
        For the concept class $\mathcal{C}:= 2^{\mathcal{X}}$, where $\mathcal{X}=\{\ket{\psi_0},\ket{\psi_1}\}$,
        \begin{equation}
            \min \{-\ln F , \alpha (1-F)\}\leq \Lambda_{\mathcal{C}}(\alpha) \leq   -\ln F
        \end{equation}
    \end{proposition}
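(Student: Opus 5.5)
The plan is to prove the two inequalities separately. Throughout, $F=|\langle\psi_0|\psi_1\rangle|^2$, because for $\mathcal{C}=2^{\mathcal{X}}$ the only opposite pair is $\{\ket{\psi_0},\ket{\psi_1}\}$.

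\textbf{Upper bound.} It suffices to exhibit one admissible tuple $(D,D',c,c')$ with $C_Q(\sigma_D^c,\sigma_{D'}^{c'})\le -\ln F$. I would take $D=D'$ uniform on $\mathcal{X}$, $c=\{\ket{\psi_0}\}$ and $c'=\{\ket{\psi_1}\}$. Then $\mathrm{dist}_{D,D'}(c,c')=1>\alpha$ (for $\alpha<1$) and $d_{TV}=0$. The two states are
\[
\sigma=\tfrac12\ket{\psi_0}\bra{\psi_0}\otimes\ket{1}\bra{1}+\tfrac12\ket{\psi_1}\bra{\psi_1}\otimes\ket{0}\bra{0}
\]
and $\rho$, which is the same with the labels swapped. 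Both are block diagonal in the label register. In the label-$1$ block, $\sigma$ is $\tfrac12\ket{\psi_0}\bra{\psi_0}$ and $\rho$ is $\tfrac12\ket{\psi_1}\bra{\psi_1}$; for pure rank-one blocks, $\Tr(\sigma^s\rho^{1-s})=\tfrac12 F$ for every $s\in(0,1)$. The label-$0$ block contributes the same, so the total is $F$ and $C_Q=-\ln F$. The infimum is therefore at most $-\ln F$.

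\textbf{Lower bound.} Fix admissible $(D,D',c,c')$. Write $p_i=D(\ket{\psi_i})$, $q_i=D'(\ket{\psi_i})$, $a_i=c(\ket{\psi_i})$ and $b_i=c'(\ket{\psi_i})$. Both $\sigma$ and $\rho$ are block diagonal in the label register, and within each block they are sums of at most two rank-one terms. I would choose $s=1/2$ and bound $\Tr(\sigma^{1/2}\rho^{1/2})$ from above; the quantity to be shown is that its negative log is at least $\min\{-\ln F,\alpha(1-F)\}$. I would split into cases according to the set $S=\{i:a_i\neq b_i\}$, which is nonempty since $\mathrm{dist}>\alpha$.

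\emph{Case $S=\mathcal{X}$, i.e.\ $c'=\overline{c}$.}
\begin{itemize}
\item If $c=\{\ket{\psi_0}\}$ (or its complement), each label block pairs a $\ket{\psi_0}$ term with a $\ket{\psi_1}$ term. Then $\Tr(\sqrt\sigma\sqrt\rho)=(\sqrt{p_0q_1}+\sqrt{p_1q_0})\sqrt F\le\sqrt F$ by Cauchy–Schwarz. This gives $C_Q\ge-\tfrac12\ln F$; to get the full $-\ln F$ I would optimize over $s$ instead of fixing $1/2$, since $\Tr(\sigma^s\rho^{1-s})=(p_0^sq_1^{1-s}+p_1^sq_0^{1-s})F\le F$ by Hölder.
\item If $c=\emptyset$ and $c'=\mathcal{X}$, the blocks are orthogonal, so $C_Q=\infty$.
\end{itemize}

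\emph{Case $|S|=1$, say $S=\{0\}$.} Then $\mathrm{dist}>\alpha$ forces $p_0,q_0>\alpha$. The point $\ket{\psi_1}$ has the same label $\ell$ under both concepts, while $\ket{\psi_0}$ has different labels. The block of label $\ell$ contains $\ket{\psi_1}$ weighted $p_1$ (and possibly $\ket{\psi_0}$ weighted $p_0$) in $\sigma$, and $\ket{\psi_1}$ weighted $q_1$ (and possibly $\ket{\psi_0}$) in $\rho$. The other block contains $\ket{\psi_0}$ only in one of the two states, so it is annihilated in the fidelity-type term. The whole computation reduces to bounding $\Tr\sqrt{\sqrt A\,B\sqrt A}$ (or $\Tr A^sB^{1-s}$) for $A=p_0\ket{\psi_0}\bra{\psi_0}+p_1\ket{\psi_1}\bra{\psi_1}$ versus $B=q_1\ket{\psi_1}\bra{\psi_1}$, which is $\sqrt{q_1}\sqrt{\langle\psi_1|A|\psi_1\rangle}=\sqrt{q_1(p_1+p_0F)}$. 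Using $q_1\le 1-\alpha$, $p_1\le 1-\alpha$ and $p_0+p_1=1$, this is at most $\sqrt{(1-\alpha)(1-p_0(1-F))}\le 1-\tfrac{\alpha}{2}(1-F)$ roughly. Taking $-\ln$ and using $-\ln(1-x)\ge x$ gives a bound of order $\alpha(1-F)$.

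The expected main obstacle is getting the exact constant $\alpha(1-F)$ rather than $\alpha(1-F)/2$. To handle it I would use the Chernoff quantity with $s$ chosen freely rather than the fidelity, since $\Tr(\sigma^s\rho^{1-s})$ with $\rho$ rank-one is $q_1^{1-s}\langle\psi_1|A^s|\psi_1\rangle$. As $s\to0$, the factor $q_1^{1-s}\to q_1\le 1-\alpha$, while $\langle\psi_1|A^s|\psi_1\rangle\to\langle\psi_1|\Pi_A|\psi_1\rangle\le1$. This should yield $C_Q\ge-\ln(1-\alpha)\ge\alpha\ge\alpha(1-F)$ in that subcase, and a similar limiting choice handles the variants where $\ket{\psi_0}$ sits in the shared block. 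The constraint $d_{TV}(D,D')\le\alpha/6$ is kept as a backup to relate $p_0$ and $q_0$ if a subcase needs both weights large simultaneously. Combining all cases gives $\Lambda_{\mathcal{C}}(\alpha)\ge\min\{-\ln F,\alpha(1-F)\}$.
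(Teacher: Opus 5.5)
Your proof is correct and follows the paper's skeleton. The upper bound uses the same witness: $D=D'$ uniform, $c=\{\ket{\psi_0}\}$, $c'=\{\ket{\psi_1}\}$, so $d_{TV}=0$ and $C_Q=-\ln F$. The lower bound uses the same case split over pairs of concepts: the $d_{TV}$ constraint is dropped, $\emptyset$ versus $\mathcal{X}$ is handled by orthogonal label registers, and the two complementary singletons by H\"older.

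The one real difference is the case of concepts that disagree on a single point. The paper, working with its pair $c_0,c_2$, evaluates at $s=1$. It discards the weight $\tilde p_0^{1-s}\le 1$ and keeps the overlap $\langle\psi_0|A|\psi_0\rangle=p_0+(1-p_0)F$ with $p_0<1-\alpha$, which gives $-\ln(1-\alpha(1-F))\ge\alpha(1-F)$. You instead send $s\to 0$. You discard the overlap factor, which tends to $\langle\psi_1|\Pi_A|\psi_1\rangle\le 1$, and keep the weight $q_1<1-\alpha$, i.e.\ you use the mass of $\rho$ lying outside the support of $\sigma$. This gives $C_Q>-\ln(1-\alpha)\ge\alpha$. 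So your argument actually proves the sharper, $F$-independent bound $\Lambda_{\mathcal{C}}(\alpha)\ge\min\{-\ln F,-\ln(1-\alpha)\}$, which implies the stated one.

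Two remarks on the write-up:
\begin{itemize}
\item The factor-of-two ``obstacle'' in your fidelity attempt is not real. Keep $q_1<1-\alpha$ instead of dropping it. Since $1-\alpha\le 1-\alpha(1-F)$, you get $\sqrt{(1-\alpha)(1-\alpha(1-F))}\le 1-\alpha(1-F)$, so $s=1/2$ already yields exactly $\alpha(1-F)$.
\item The mirrored subcase, where $\ket{\psi_0}$ sits in the shared block of $\rho$ rather than $\sigma$, should be stated explicitly. It is the limit $s\to 1$ with the bound $p_1<1-\alpha$, and follows from the symmetry $s\leftrightarrow 1-s$ of $C_Q$.
\end{itemize}
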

    \begin{proof}
        See Appendix \ref{proof_of_proposition_4}.
    \end{proof}
    
    Interestingly, this relation is not specific to this concept class but is quite general, as shown below:

    \begin{proposition}\label{upper_bound_proposition}
        For any concept class $\mathcal{C}$ with VC-dimension $d \geq 2$ and for $\alpha \leq \frac{1}{2}$,
        \begin{equation}
            \Lambda_{\mathcal{C}}(\alpha) \leq \min \{-\ln F, 2\alpha\} 
        \end{equation}
    \end{proposition}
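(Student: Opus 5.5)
The plan is to bound the infimum defining $\Lambda_{\mathcal{C}}(\alpha)$ from above by exhibiting explicit feasible tuples $(D,D',c,c')$. In both constructions I take $D=D'$, so that $d_{TV}(D,D')=0\leq \alpha/6$ holds automatically and $\mathrm{dist}_{D,D'}(c,c')=\mathrm{dist}_D(c,c')$. The two bounds $-\ln F$ and $2\alpha$ then come from two different choices.

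\emph{Bound $\Lambda_{\mathcal{C}}(\alpha)\leq -\ln F$.} Since $d\geq 2$, an opposite pair exists. Take any opposite pair $\{\ket{\psi},\ket{\psi'}\}$ with witnessing concepts $c,c'$, and let $D$ be uniform on $\{\ket{\psi},\ket{\psi'}\}$. Then $\mathrm{dist}_D(c,c')=1>\alpha$. The two states are
\begin{equation}
\sigma_D^c=\tfrac12\ket{\psi}\bra{\psi}\otimes\ket{1}\bra{1}+\tfrac12\ket{\psi'}\bra{\psi'}\otimes\ket{0}\bra{0},\qquad \sigma_D^{c'}=\tfrac12\ket{\psi}\bra{\psi}\otimes\ket{0}\bra{0}+\tfrac12\ket{\psi'}\bra{\psi'}\otimes\ket{1}\bra{1}.
\end{equation}
Both are block diagonal with respect to the label register, and in each block they are rank-one operators $\tfrac12\ket{\psi}\bra{\psi}$ and $\tfrac12\ket{\psi'}\bra{\psi'}$. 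For a pure projector $P$, $(\lambda P)^s=\lambda^s P$. Hence each block contributes $\tfrac12|\bra{\psi}\ket{\psi'}|^2$ to $\Tr(\sigma^s\rho^{1-s})$, independently of $s$. This gives $C_Q(\sigma_D^c,\sigma_D^{c'})=-\ln|\bra{\psi}\ket{\psi'}|^2$. Taking a sequence of opposite pairs whose overlaps approach the supremum $F$ yields $\Lambda_{\mathcal{C}}(\alpha)\leq -\ln F$; if $F$ is not attained, the infimum still handles the limit.

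\emph{Bound $\Lambda_{\mathcal{C}}(\alpha)\leq 2\alpha$.} Let $\{\ket{x_1},\ket{x_2}\}$ be shattered. Choose $c,c'\in\mathcal{C}$ that agree on $\ket{x_1}$ and disagree on $\ket{x_2}$. For $p\in(\alpha,1]$, let $D$ put mass $1-p$ on $\ket{x_1}$ and mass $p$ on $\ket{x_2}$. Then $\mathrm{dist}_D(c,c')=p>\alpha$, and
\begin{equation}
\sigma_D^c-\sigma_D^{c'}=p\,\ket{x_2}\bra{x_2}\otimes\bigl(\ket{b}\bra{b}-\ket{1-b}\bra{1-b}\bigr),
\end{equation}
so $\tfrac12\|\sigma_D^c-\sigma_D^{c'}\|_1=p$. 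I would then invoke the Audenaert et al.\ inequality $\Tr(\sigma^s\rho^{1-s})\geq 1-\tfrac12\|\sigma-\rho\|_1$ for all $s\in[0,1]$. This gives $C_Q(\sigma_D^c,\sigma_D^{c'})\leq -\ln(1-p)$. Letting $p\downarrow\alpha$ yields $\Lambda_{\mathcal{C}}(\alpha)\leq -\ln(1-\alpha)$. Finally, $-\ln(1-\alpha)\leq 2\alpha$ for $\alpha\leq 1/2$, by convexity of $-\ln(1-x)$ and the endpoint check $\ln 2\leq 1$.

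The main obstacle is the second bound. A direct computation of $\Tr(\sigma^s\rho^{1-s})$ is messy there, because the $\ket{x_1}$ component may share a label block with the $\ket{x_2}$ component and $\ket{x_1},\ket{x_2}$ need not be orthogonal. The trace-distance lower bound on the Chernoff quantity sidesteps this entirely, so that is the tool I would reach for first.
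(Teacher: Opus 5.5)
Your proposal is correct and follows essentially the same route as the paper. Both arguments take $D=D'$. The $-\ln F$ bound uses the uniform distribution on an opposite pair; the paper gets this by pointing to the $\{c_2,c_3\}$ computation in the proof of Proposition~\ref{proposition_example}, which you redo directly. The $2\alpha$ bound uses a two-point distribution with mass just above $\alpha$ on a disagreement point, together with the Audenaert et al.\ inequality $\Tr(\sigma^s\rho^{1-s})\geq 1-\tfrac12\|\sigma-\rho\|_1$. The only difference is the final elementary step. You keep $-\ln(1-\alpha)$ and conclude by convexity, whereas the paper passes through $C_Q(\rho,\sigma)\leq\|\rho-\sigma\|_1$, stated for $\|\rho-\sigma\|_1\leq 1$, via $\ln(1+x)\geq x/(1+x)$. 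Your version is slightly tighter and also handles $\alpha=1/2$ directly, where the paper's hypothesis $2(\alpha+\eta)\leq 1$ is not met for any $\eta>0$.
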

    \begin{proof}
        The part of the upper bound corresponding to $-\ln F$ can be derived using Proposition \ref{proposition_example}. For the other part, since $d \geq 2$, there exist concepts $c, c'$ and a pair of states $\ket{\psi_1}, \ket{\psi_0}$ such that
        \begin{equation}
            c(\ket{\psi_0})\neq c'(\ket{\psi_0}) \text{ and } c(\ket{\psi_1})= c'(\ket{\psi_1})
        \end{equation}
        Consider $D=D'$ with support $\{\ket{\psi_0},\ket{\psi_1}\}$, such that $\mathbb{P}(\ket{\psi_0})=\alpha+\eta$ and $\mathbb{P}(\ket{\psi_1})=1-\alpha-\eta$, then
        \begin{align}
            \Lambda_{\mathcal{C}}(\alpha) &\leq \inf_{\eta>0 }C_Q(\sigma_D^c,\sigma_{D}^{c'}) \leq  \inf_{\eta>0 }\|\sigma_D^c- \sigma_{D}^{c'}\|_1 \nonumber \\& = \inf_{\eta>0 }(\alpha+\eta)  \| \ket{\psi_0}\bra{\psi_0}\otimes \ket{0}\bra{0}- \ket{\psi_0}\bra{\psi_0}\otimes \ket{1}\bra{1} \|_1 = 2\alpha 
        \end{align}
        where the second inequality will be proved next. We will use the fact that for states $\rho, \sigma$ such that $\|\rho-\sigma\|_1\leq 1$, it holds that 
        \begin{equation}
            C_Q(\rho,\sigma) \leq  \|\rho- \sigma\|_1
        \end{equation}
        Using expression (6) from \cite{audenaert2007discriminating},  
        \begin{equation}
            -\ln \inf_{s\in [0,1]} \Tr(\sigma^s \rho ^{1-s} )\leq -\ln \left( 1- \frac{\|\rho- \sigma\|_1}{2}\right)
        \end{equation}
        Consequently, since $\ln (1+x) \geq \frac{x}{1+x}$ for $x>-1$ \cite{topsoe2007some}, 
        \begin{equation}
            -\ln \inf_{s\in [0,1]} \Tr(\sigma^s \rho ^{1-s} )\leq  \frac{\|\rho- \sigma\|_1}{2-\|\rho- \sigma\|_1} \leq \|\rho- \sigma\|_1
        \end{equation}
        where the last inequality follows from the assumption that $\|\rho - \sigma\|_1 \leq 1$.
    \end{proof}

     \begin{figure}[H]
    	\centering
    	\includegraphics[width=13cm]{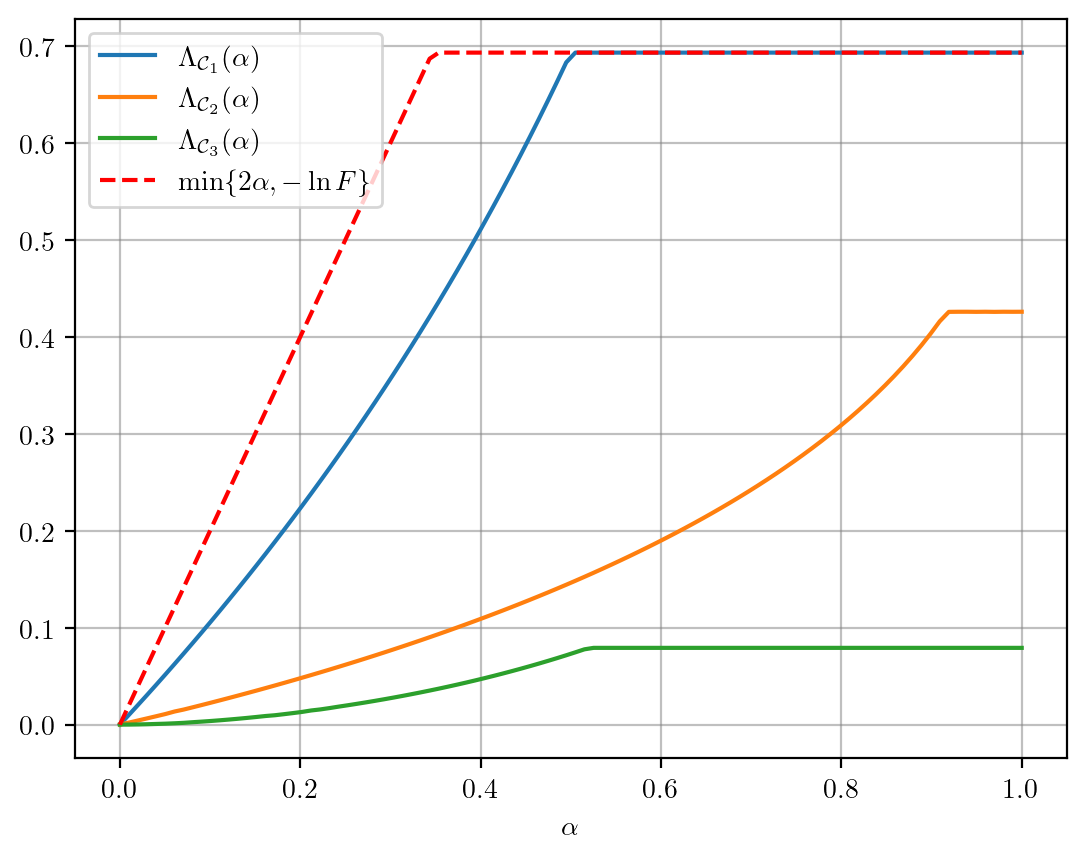}
    	\caption{Plot of the function $\Lambda_\mathcal{C}$ for the concept classes $\mathcal{C}_1=2^{\{\ket{0},\ket{+}\}}$, $\mathcal{C}_2=2^{\{\ket{0},\ket{+},\ket{+y}\}}$, and $\mathcal{C}_3=2^{\{\ket{0},\ket{1},\ket{+},\ket{+y}\}}$, along with the upper bound $\min\{-\ln F, 2\alpha\}$. The code used to generate this figure is available at \cite{perez2025code}.}
    	\label{Lambda_figure}
    \end{figure}

    Therefore, $\Lambda_{\mathcal{C}}(\alpha)$ is always bounded by the constant $-\ln F$. This can be seen clearly in Figure \ref{Lambda_figure}, where the function $\Lambda_\mathcal{C}$ is plotted for the concept classes $\mathcal{C}_1=2^{\{\ket{0},\ket{+}\}}$, $\mathcal{C}_2=2^{\{\ket{0},\ket{+},\ket{+y}\}}$, and $\mathcal{C}_3=2^{\{\ket{0},\ket{1},\ket{+},\ket{+y}\}}$. However, the plot suggests that $\Lambda_{\mathcal{C}}(\alpha)$ is, in general, upper bounded by some constant $B \leq -\ln F$. Indeed, this is the case. To define this constant, we first generalize the notion of an opposite pair of states as follows:

    \begin{definition}
        A pair of quantum density matrices $\{\sigma, \sigma'\}$ is an opposite pair with respect to a concept class $\mathcal{C}$ if there exist sets of states $\{\ket{\psi_i}\}_i$ and $\{|\psi'_j\rangle\}_j$ and concepts $c$ and $c'$ such that for all $i,j$,
        \begin{equation}
            c(\ket{\psi_i})=c'(\ket{\psi'_j})=1 \text{ and } c(\ket{\psi'_j})=c'(\ket{\psi_i})=0
        \end{equation}
        and
        \begin{equation}
            \sigma = \sum_i \alpha_i \ket{\psi_i}\bra{\psi_i} \text{ and } \sigma' = \sum_j \beta_j \ket{\psi'_j}\bra{\psi'_j}
        \end{equation}
        where $0\leq \alpha_i,\beta_i$, $\sum_i \alpha_i =1$ and $\sum_j \beta_j =1 $.
        
    \end{definition}

    We now introduce the constants suggested in Figure \ref{Lambda_figure}.
    
    \begin{proposition}
        Let $\sigma,\sigma'$ be opposite pairs with respect to a concept class $\mathcal{C}$, and let us denote by $\rho(\eta):= \eta\sigma \otimes \ket{1}\bra{1}+(1-\eta)\sigma' \otimes \ket{0}\bra{0}$ and $\rho'(\eta):=\eta\sigma \otimes \ket{0}\bra{0}+(1-\eta)\sigma' \otimes \ket{1}\bra{1}$, then
        \begin{equation}
            B_{\mathcal{C}}=\inf_{\sigma,\sigma',\eta} C_Q ( \rho(\eta),\rho'(\eta))
        \end{equation}
        satisfies that $\Lambda_\mathcal{C}(\alpha)\leq B_{\mathcal{C}}$.
    \end{proposition}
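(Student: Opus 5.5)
Fix an arbitrary opposite pair $\{\sigma,\sigma'\}$, with witnessing states $\{\ket{\psi_i}\}_i$, $\{\ket{\psi'_j}\}_j$, concepts $c,c'$ and weights $\alpha_i,\beta_j$, and fix $\eta\in[0,1]$. It suffices to show $\Lambda_{\mathcal{C}}(\alpha)\leq C_Q(\rho(\eta),\rho'(\eta))$ for every $\alpha$ in the range considered; taking the infimum over $\sigma,\sigma',\eta$ then gives $\Lambda_{\mathcal{C}}(\alpha)\leq B_{\mathcal{C}}$. The plan is to exhibit a single feasible point of the infimum defining $\Lambda_{\mathcal{C}}(\alpha)$ whose Chernoff distance is exactly $C_Q(\rho(\eta),\rho'(\eta))$. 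This mirrors the proof of the $-\ln F$ part of Proposition \ref{upper_bound_proposition}, which is the special case of pure $\sigma,\sigma'$.

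Take $D=D'$ to be the distribution that assigns probability $\eta\alpha_i$ to $\ket{\psi_i}$ and $(1-\eta)\beta_j$ to $\ket{\psi'_j}$. The supports are disjoint because $c$ takes value $1$ on every $\ket{\psi_i}$ and $0$ on every $\ket{\psi'_j}$. Then $d_{TV}(D,D')=0\leq\alpha/6$. From the definition of opposite pairs we have $c(\ket{\psi_i})=1$, $c(\ket{\psi'_j})=0$, $c'(\ket{\psi_i})=0$ and $c'(\ket{\psi'_j})=1$. Substituting these into the definition of $\sigma^c_D$ gives
\begin{equation}
\sigma^c_D=\eta\,\sigma\otimes\ket{1}\bra{1}+(1-\eta)\,\sigma'\otimes\ket{0}\bra{0}=\rho(\eta),\qquad \sigma^{c'}_{D}=\rho'(\eta).
\end{equation}
In particular, $C_Q(\sigma^c_D,\sigma^{c'}_D)=C_Q(\rho(\eta),\rho'(\eta))$.

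For feasibility we also need $\mathrm{dist}_{D,D}(c,c')>\alpha$. Since $c$ and $c'$ disagree on the whole support of $D$, $\mathrm{dist}_D(c,c')=1>\alpha$ for every $\alpha<1$, which covers all relevant $\alpha$ (in particular $\alpha=3\epsilon\leq 3/8$). If one wants the bound for $\alpha\geq1$, the infimum is over an empty set and the statement should be restricted to $\alpha<1$ as is implicit. The only subtle step is the endpoints $\eta\in\{0,1\}$. There the distribution is supported on only one family, and it is still valid, with $\mathrm{dist}=1$. So no limiting argument is needed, and the pair $(D,c),(D,c')$ is admissible for all $\eta\in[0,1]$.

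Hence $\Lambda_{\mathcal{C}}(\alpha)\leq C_Q(\rho(\eta),\rho'(\eta))$ for each admissible triple $(\sigma,\sigma',\eta)$, and taking the infimum yields the claim. Finally, choosing $\sigma,\sigma'$ pure recovers $B_{\mathcal{C}}\leq -\ln F$ via Proposition \ref{proposition_example}. This confirms that the new constant refines the earlier one.
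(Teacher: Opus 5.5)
Your proof is correct and follows the paper's approach: the paper only remarks that the result ``follows immediately from the definition of $\Lambda_{\mathcal{C}}(\alpha)$ and the notion of an opposite pair,'' and your construction is the natural way to fill that in. You take $D=D'$ with weights $\eta\alpha_i$ and $(1-\eta)\beta_j$, which gives $\sigma_D^{c}=\rho(\eta)$, $\sigma_D^{c'}=\rho'(\eta)$, $d_{TV}(D,D')=0$ and $\mathrm{dist}_D(c,c')=1$. Your observation that the bound needs $\alpha<1$ (otherwise the infimum defining $\Lambda_{\mathcal{C}}(\alpha)$ ranges over an empty set) correctly states a restriction the paper leaves implicit.
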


    The proof follows immediately from the definition of $\Lambda_\mathcal{C}(\alpha)$ and the notion of an opposite pair.
    
    Importantly, similarly to the case of the pair of constants $(d, F)$, the pair $(d, B_{\mathcal{C}})$ does not suffice to characterize the sample complexity. This is formally stated in the following proposition.

    \begin{proposition}\label{Proposition_7}
        There exists a concept class $\mathcal{C}$ that is learnable using quantum examples such that
        \begin{equation}
            \Lambda_\mathcal{C}(\alpha)\leq \alpha^2.
        \end{equation}
        It follows that the sample complexity of this concept class is lower bounded by $\Omega(1/\epsilon^2)$.
    \end{proposition}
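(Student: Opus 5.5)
We will construct an explicit finite concept class with only two concepts, $\mathcal{C}=\{c,c'\}$. The two concepts differ on a single state, and the labels alone will not reveal on which sample that state occurred. The underlying intuition is this. The linear behaviour $\Lambda_{\mathcal{C}}(\alpha)\sim\alpha$ seen in Proposition \ref{proposition_example} and Proposition \ref{upper_bound_proposition} comes from perturbations of $\sigma_D^c$ that \emph{change the support}: mass is moved onto a label block where it has no overlap. If instead both label blocks of $\sigma_D^c$ and $\sigma_D^{c'}$ are full rank and differ by a perturbation of size $O(\alpha)$, then $C_Q$ is quadratic in the perturbation, as for the classical Chernoff/Bhattacharyya distance between nearby full-support distributions.

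Concretely, take $\mathcal{H}=\mathbb{C}^2$ and $\mathcal{X}=\{\ket{0},\ket{1},\ket{+},\ket{-},\ket{+y}\}$. Let $c(\ket{0})=1$ and $c'(\ket{0})=0$. Both concepts label $\ket{+},\ket{+y}$ with $1$ and $\ket{1},\ket{-}$ with $0$.

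For the upper bound on $\Lambda_{\mathcal{C}}$, take $D=D'$ with $\mathbb{P}(\ket{0})=\alpha+\eta$ and the remaining mass spread as a constant $\beta$ on each other state. Then $\mathrm{dist}_D(c,c')>\alpha$ and $d_{TV}(D,D')=0$. The label-$1$ block of $\sigma_D^c$ is $\beta(\ket{+}\bra{+}+\ket{+y}\bra{+y})+(\alpha+\eta)\ket{0}\bra{0}$, while that of $\sigma_D^{c'}$ is $\beta(\ket{+}\bra{+}+\ket{+y}\bra{+y})$. The label-$0$ blocks are $\beta(\ket{1}\bra{1}+\ket{-}\bra{-})$ plus or minus the same rank-one term. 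All four blocks are full rank, with minimum eigenvalue of order $\beta$. Because the states are block diagonal in the label, $\min_s\Tr(\sigma^s\rho^{1-s})$ factorizes as a sum over blocks. Evaluating at $s=1/2$, a second-order expansion of $\Tr(\sqrt{A+\Delta}\sqrt{A})$ around a full-rank $A$ shows that the first-order term cancels against the normalization. This gives $C_Q\le \kappa(\alpha+\eta)^2/\beta$. After choosing $\beta$ a constant and rescaling the construction, or dividing the bound by a constant, this yields $\Lambda_{\mathcal{C}}(\alpha)\le\alpha^2$ for small $\alpha$. The only delicate point is checking that no cross term linear in $\alpha$ survives. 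I would verify this by writing $\Tr(\sqrt{A+\Delta}\sqrt{A})=\Tr A+\tfrac12\Tr\Delta-O(\|\Delta\|^2/\lambda_{\min}(A))$ and summing over both blocks, where the $\Tr\Delta$ contributions cancel since the total traces agree.

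For learnability, which is the main obstacle, I would use the following two-hypothesis algorithm. Given $t$ samples, perform the optimal (Holevo--Helstrom/Chernoff) test between the two product states $(\sigma_D^c)^{\otimes t}$ and $(\sigma_D^{c'})^{\otimes t}$. Because $D$ is unknown, this test is not directly implementable. I would first estimate the classical label statistics, and then apply a test that is robust to $D$. For instance, measure each sample's state in the $\{\ket{0},\ket{1}\}$ basis restricted to the label-$1$ block, and compare the empirical frequency of outcome $0$ against the frequency predicted by each concept.

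The key lemma to prove is a dichotomy. For every $D$, either $\mathrm{dist}_D(c,c')=\mathbb{P}_D(\ket{0})\le\epsilon$, in which case either output is acceptable, or the two outcome distributions of this fixed measurement differ by at least $\kappa\epsilon$ in a statistic with variance $O(1)$. The second case forces $t=O(\ln(1/\delta)/\epsilon^2)$ samples, by a Hoeffding bound. I expect the difficulty to lie in choosing a $D$-independent measurement whose bias is linear in $\mathbb{P}_D(\ket{0})$ uniformly over $D$. I would try first the $Z$-measurement on label-$1$ samples combined with the $X$- and $Y$-measurements to estimate the unknown weights on $\ket{+},\ket{+y}$. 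If this fails, I would fall back on invoking the upper bound of Section \ref{upper_bound_section} for a variant of the class built on linearly independent states. Finally, the claimed $\Omega(1/\epsilon^2)$ lower bound follows by plugging $\Lambda_{\mathcal{C}}(3\epsilon)\le 9\epsilon^2$ into Theorem \ref{theorem_lower_bound}.
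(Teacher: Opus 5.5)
Your construction (two concepts differing only on $\ket{0}$, with all label blocks full rank) is a viable route that differs from the paper's, but as written the bound on $C_Q$ goes in the wrong direction. Since $C_Q(\sigma,\rho)=-\ln\min_{s}\Tr(\sigma^s\rho^{1-s})$ and the minimum is at most the value at $s=\tfrac12$, evaluating at $s=\tfrac12$ only gives $C_Q\geq-\ln\Tr(\sqrt{\sigma}\sqrt{\rho})$, which is a lower bound. The trace splits over label blocks for each fixed $s$, but the minimum over $s$ does not. The paper may evaluate at $s=\tfrac12$ because it first proves $f(s)=f(1-s)$ (via transpose and the phase gate) and $f=g$, and then uses convexity in $s$; your construction has no such symmetry.

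The repair is easy. The first-order terms of $\Tr((A_1+\Delta)^sA_1^{1-s})$ and $\Tr(A_0^s(A_0+\Delta)^{1-s})$ are $s\Tr\Delta$ and $(1-s)\Tr\Delta$. Their sum is therefore $1-O(\gamma^2)$ uniformly in $s\in[0,1]$ for fixed $\beta$, where $\gamma=\mathbb{P}_D(\ket{0})$. This still gives only $\Lambda_{\mathcal{C}}(\alpha)=O(\alpha^2)$, and ``dividing the bound by a constant'' is not a valid step. With $D=D'$ the constant really does exceed $1$. A second-order expansion together with joint convexity of $(a,b)\mapsto(\sqrt a+\sqrt b)^{-2}$ shows that each block loses at least $\gamma^2/(8\bra{0}A\ket{0})$ in $\Tr(\sqrt{\sigma}\sqrt{\rho})$. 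To leading order the total deficit is then at least $\tfrac{\gamma^2}{4}(\tfrac{1}{p+q}+\tfrac{1}{u})>\gamma^2>\alpha^2$, with $p,q,u$ the weights of $\ket{+},\ket{+y},\ket{-}$. The paper instead swaps the labels of two states so that the label marginals coincide. This yields the exact deficit $\alpha^2/(2+2\sqrt{2\alpha(1-\alpha)})\leq\alpha^2/2$. Only the quadratic scaling matters for the $\Omega(1/\epsilon^2)$ conclusion, but your argument does not establish the literal inequality $\Lambda_{\mathcal{C}}(\alpha)\le\alpha^2$.

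For learnability you stop at a plan, and your fallback cannot work. For linearly independent states, Theorem \ref{proposition_sufficient} combined with Theorem \ref{theorem_lower_bound} forces $\Lambda_{\mathcal{C}}(3\epsilon)=\Omega(\epsilon\lambda_{\min}(G))$, which is incompatible with a quadratic $\Lambda_{\mathcal{C}}$. Your primary plan, however, closes immediately and needs no $X$ or $Y$ estimates. Since $\ket{+}$ and $\ket{+y}$ have $\langle Z\rangle=0$, let $z_i$ be the $Z$-outcome of sample $i$ if its label is $1$, and $z_i=0$ otherwise. Then $\mathbb{E}[z_i]=\mathbb{P}_D(\ket{0})$ under $c$ and $\mathbb{E}[z_i]=0$ under $c'$, for every $D$. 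Output $c$ iff $\frac1t\sum_i z_i>\epsilon/2$; Hoeffding's inequality then gives $t=O(\epsilon^{-2}\ln(1/\delta))$.

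With these repairs, your route differs from the paper's in both halves. The paper uses an exact symmetric computation, and its learner estimates $X$ and $Y$ in both label blocks using $O(\epsilon^{-3}\ln(1/\delta))$ examples. Your perturbative argument applies to any construction with full-rank label blocks. Your learner is also simpler and matches the $\Omega(1/\epsilon^2)$ lower bound, at the cost of a worse constant in $\Lambda_{\mathcal{C}}$.
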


    \begin{proof}
        See Appendix \ref{proof_of_proposition_7}.
    \end{proof}

    The condition that the concept class is learnable is included to exclude trivial concept classes, such as those satisfying $\Lambda_{\mathcal{C}}(\alpha)=0$, which are not learnable and would otherwise also satisfy the statement of Proposition \ref{Proposition_7}.

    This proposition implies that the function $\Lambda_{\mathcal{C}}(\alpha)$ cannot be characterized using only the constant $B_\mathcal{C}$, as the rate of change of the function can significantly impact the sample complexity and should be taken into account.

    \section{Upper Bound on the Quantum Sample Complexity}
    \label{upper_bound_section}

    Now, we turn our attention to upper bounds on the quantum sample complexity. We are going to analyze a scenario that is closely related to the classical setting and should therefore exhibit similar behavior. Specifically, we consider the case in which the states in the domain are linearly independent. 
    \begin{theorem}\label{proposition_sufficient}
        Let $\mathcal{X}:=\{\ket{\psi_i}\}_{i}$, where the states are linearly independent. Then any concept class $\mathcal{C}\subseteq 2^{\mathcal{X}}$ is PAC learnable from quantum examples. Moreover, its sample complexity is bounded by
        \begin{equation}
            t_{Q,\mathcal{C}}(\epsilon,\delta) \leq c_U \left(\frac{d+\ln\frac{1}{\delta}}{\epsilon \, \lambda_{\mathrm{min}}(G)}  \right) 
        \end{equation}
        where $G_{i,j}=\bra{\psi_i}\ket{\psi_j}$.
    \end{theorem}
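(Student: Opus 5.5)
Since the states in $\mathcal{X}$ are linearly independent, they admit an unambiguous state discrimination (USD) measurement. This is a POVM $\{E_i\}_i\cup\{E_?\}$ with $\bra{\psi_j}E_i\ket{\psi_j}=p_i\,\delta_{ij}$: an outcome $i$ certifies that the state was $\ket{\psi_i}$, and ``?'' is an inconclusive outcome. The plan is to build $E_i=p\,\ket{\tilde\psi_i}\bra{\tilde\psi_i}$ from the dual (reciprocal) basis $\ket{\tilde\psi_i}=\sum_j (G^{-1})_{ji}\ket{\psi_j}$, which satisfies $\bra{\tilde\psi_i}\psi_j\rangle=\delta_{ij}$. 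Then $\sum_i \ket{\tilde\psi_i}\bra{\tilde\psi_i}$ restricted to the span has Gram-type operator norm $\lambda_{\max}(G^{-1})=1/\lambda_{\min}(G)$. Choosing the common success probability $p=\lambda_{\min}(G)$ therefore gives $\sum_i E_i\le I$, so $E_?=I-\sum_iE_i\ge 0$, and every state is identified with probability exactly $\lambda_{\min}(G)$, independent of $i$. The first step is to verify this construction. If $\mathcal{X}$ is infinite, linear independence must be interpreted suitably, namely $G$ bounded below with $\lambda_{\min}(G)>0$, and the same dual-frame argument applies. I expect this to be the main technical obstacle, together with making sure the bound does not depend on $|\mathcal{X}|$.

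The learner applies this USD measurement independently to each of the $t$ received states. For each example the result is the classical pair $(\ket{\psi_i},c(\ket{\psi_i}))$ with probability $\lambda_{\min}(G)$ and ``?'' otherwise, and the ``?'' events are independent of which state was drawn. Conditioned on success, the identified states are i.i.d.\ from $D$. Hence the learner obtains a classical sample from the correct distribution $D$ whose size is $M\sim\mathrm{Bin}(t,\lambda_{\min}(G))$. It discards the failures and runs an optimal classical PAC learner, achieving \eqref{sample_complexity_classical}, on the remaining labeled examples.

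It remains to count samples. Let $m=c_U'\frac{d+\ln(2/\delta)}{\epsilon}$ be the classical sample size guaranteeing error at most $\epsilon$ with probability at least $1-\delta/2$. We need $M\ge m$ with probability at least $1-\delta/2$. Taking $t=\frac{2m+8\ln(2/\delta)}{\lambda_{\min}(G)}$, a multiplicative Chernoff bound $\mathbb{P}(M<\mathbb{E}M/2)\le e^{-\mathbb{E}M/8}$ gives exactly this. A union bound then yields the PAC guarantee. Absorbing constants, using $\ln(2/\delta)\le 2\ln(1/\delta)$ for $\delta\le 1/100$ and $\epsilon\le1$, gives $t=O\!\left(\frac{d+\ln(1/\delta)}{\epsilon\,\lambda_{\min}(G)}\right)$. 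This bound is finite, so learnability follows.

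If the classical learner needs exactly $m$ examples, we simply truncate to the first $m$ successes. Because the success events are independent of the draws, these examples are still i.i.d.\ from $D$.
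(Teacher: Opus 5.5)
Your proposal is correct and follows the same route as the paper. Both arguments apply an equal-success unambiguous discrimination measurement with success probability $\lambda_{\min}(G)$ (the paper's Lemma~\ref{lemma_unambiguous_discrimination}) to each example, note that the conclusive outcomes are i.i.d.\ classical pairs from $D$, use a Chernoff bound to guarantee enough of them, and run an optimal classical learner; your explicit dual-basis construction, which supplies the achievability direction that the paper largely cites from Chefles, and your $\delta/2$ split, giving failure probability $\delta$ rather than the paper's $2\delta$, are refinements rather than a different argument.
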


    \begin{proof}
        See Appendix \ref{proof_proposition}.
    \end{proof}

   As expected, the sample complexity can be upper bounded by a quantity that closely resembles its classical counterpart. Indeed, the case $\lambda_{\mathrm{min}}(G)=1$ is precisely the classical one, as it corresponds to the states being mutually orthogonal. In this case, we recover the classical sample complexity. More generally, provided that $\lambda_{\mathrm{min}}(G)>0$, the sample complexity differs from the classical bound by at most a multiplicative factor of $1/\lambda_{\mathrm{min}}(G)$. Another important consequence of this result is that $\Lambda_{\mathcal{C}}(3\epsilon)=\Omega(\epsilon\lambda_{\min}(G))$. Hence, whenever $\lambda_{\min}(G)>0$, the function $\Lambda_{\mathcal{C}}(\alpha)$ is sandwiched between linear upper and lower bounds, ruling out behaviors such as that of Proposition \ref{Proposition_7}. Besides this connection, this quantity is also closely related to $-\ln F$. In particular, it satisfies the following inequality:
   \begin{equation}
        \lambda_{\mathrm{min}}(G)\leq 1- \max_{i,j: i\neq j} |\bra{\psi_i}\ket{\psi_j}|\leq -\ln \max_{i,j: i\neq j}|\bra{\psi_i}\ket{\psi_j}|  \leq -\frac{1}{2} \ln  F
   \end{equation}
    The proof of this inequality is shown in Appendix \ref{section_lambda}.

    \section{Discussion of More Informative Examples}
    \label{discussion_examples}

    So far, we have considered samples of the form $\{\ket{\psi}, c(\ket{\psi})\}$. In some cases, however, one may have access to multiple copies of the state, namely $\{\ket{\psi}^{\otimes k}, c(\ket{\psi})\}$. While this provides strictly more informative samples, the previous results still extend to this setting whenever $k$ is a fixed constant independent of $\mathcal{C}$.
    
    A more interesting regime arises when $k$ is allowed to depend on the concept class $\mathcal{C}$, and even on $\epsilon$ and $\delta$. We can then generalize the notion of $(\epsilon,\delta)$-learnability using quantum examples as follows: there exist an algorithm $\mathcal{A}$ and functions $k(\epsilon,\delta)$ and $t(\epsilon,\delta)$ such that, given input $\{\ket{\phi_i}^{\otimes k(\epsilon,\delta)}, c(\ket{\phi_i})\}_{i=1}^{t(\epsilon,\delta)}$, the algorithm outputs a hypothesis $h$ such that
    \begin{equation}
        \mathbb{P}_{\ket{\psi}\sim D}( h(\ket{\psi})\neq c(\ket{\psi}))\leq \epsilon     
    \end{equation}
    with probability at least $1-\delta$.

    In this framework, the situation differs from previous results. In particular, if the VC-dimension of $\mathcal{C}$ is finite and the domain $\mathcal{X}$ is finite, then there exists a finite value of $k(\epsilon,\delta)$ such that the concept class is learnable. Note that this is not the case for $k=1$, as a counterexample is given in the proof of Proposition~\ref{unlernability_PAC}.

\begin{proposition}
    Let $\mathcal{C}$ be a concept class with VC-dimension $d$ over a finite domain $\mathcal{X}$. Then there exists an algorithm $\mathcal{A}$ such that, for
    \begin{equation}
        k= O\left(\frac{\ln |\mathcal{X}|}{-\ln \max_{i,j: i\neq j}|\bra{\psi_i}\ket{\psi_j}| }\right),
    \end{equation}
    and
    \begin{equation}
        t(\epsilon,\delta)=O\left(\frac{d+\ln\frac{1}{\delta}}{\epsilon}  \right),
    \end{equation}
    the algorithm, on input $\left\{\ket{\phi_i}^{\otimes k},\, c(\ket{\phi_i})\right\}_{i=1}^{t(\epsilon,\delta)}$, outputs a hypothesis $h$ such that
    \begin{equation}
        \mathbb{P}_{\ket{\psi}\sim D}( h(\ket{\psi})\neq c(\ket{\psi})\bigr)\leq \epsilon
    \end{equation}
    with probability at least $1-\delta$.
\end{proposition}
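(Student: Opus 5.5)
The plan is to reduce to a classical learning problem in which the labels carry \emph{bounded} noise, rather than to a noiseless one. This is what allows $k$ not to depend on $\epsilon$ and $\delta$. Write $\mu:=\max_{i\neq j}|\bra{\psi_i}\ket{\psi_j}|$ and $N:=|\mathcal{X}|$.

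\textbf{Step 1: identify each example with constant error.} Apply to each block $\ket{\phi_i}^{\otimes k}$ the pretty good measurement for the ensemble $\{\ket{\psi_j}^{\otimes k}\}_{j=1}^N$, producing a guess $\hat x_i$. The pairwise overlaps of the $k$-copy states are at most $\mu^k$. By the standard pretty-good-measurement estimate, or a Gram-matrix perturbation argument in which $G^{(k)}$ is within $N\mu^k$ of the identity, the misidentification probability satisfies $W(\hat x\neq j\mid x=j)\le c\,N\mu^{k}$ for every $j$. Choosing
\[
k=\Big\lceil \frac{\ln(4cN/\eta_0)}{-\ln\mu}\Big\rceil=O\Big(\frac{\ln|\mathcal{X}|}{-\ln\mu}\Big)
\]
makes this at most a fixed constant $\eta_0$, say $\eta_0=1/16$. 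This is the only place $k$ enters, and it depends on neither $\epsilon$ nor $\delta$. The learner now holds i.i.d.\ classical pairs $(\hat x_i,c(x_i))$, drawn from a \emph{fixed} channel $W$ that is close to the identity.

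\textbf{Step 2: learn from the noisy pairs.} Regard the data as examples $(\hat x,y)$ from the distribution $D'$ of $\hat x$, with $y=c(x)$. I would run a noise-tolerant ERM: output the concept $h\in\mathcal{C}$ minimizing empirical disagreement with the observed labels. The hypothesis is evaluated on the true state, since $h(\ket{\psi_j})$ is simply the value of $h$ on $\psi_j\in\mathcal{X}$. I would then invoke the Massart-noise sample bound $O\big((d+\ln\frac1\delta)/(\epsilon(1-2\eta)^2)\big)$ for ERM (Massart--N\'ed\'elec). With $\eta$ bounded away from $1/2$ by a constant, this gives $t=O\big((d+\ln\frac1\delta)/\epsilon\big)$, as claimed. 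Finally, I would transfer the error from $D'$ back to $D$. Because every $W(\cdot\mid j)$ places mass at least $1-\eta_0$ on $j$, we have $D'(j)\ge(1-\eta_0)D(j)$. It follows that $\mathrm{dist}_D(h,c)\le \mathrm{dist}_{D'}(h,c)/(1-\eta_0)$, which costs only a constant factor in $\epsilon$.

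\textbf{Main obstacle.} The difficulty is that Massart noise requires a uniform bound $\mathbb{P}(y\neq c(\hat x)\mid \hat x=j)\le\eta<1/2$. What Step 1 controls is instead the forward error $W(\hat x\neq j\mid x=j)$. The backward (posterior) flip rate can be large at an outcome $j$ that has small $D$-mass but receives leakage from a heavy state $i$. My first attempt would be to split the outcomes into two sets. \emph{Heavy} outcomes are those with $D(j)\ge \lambda\sum_{i\neq j}D(i)\,W(j\mid i)$, where $\lambda$ is a constant. On these the posterior flip rate is at most $1/(1+\lambda)<1/2$, and the Massart analysis applies. \emph{Light} outcomes are the rest. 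On these, $D(j)\le \lambda^{-1}\sum_i D(i)W(j\mid i)$, and summing over light $j$ shows that their total $D$-mass is at most $\lambda^{-1}\eta_0$.

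That bound is only a constant, not $O(\epsilon)$. So for the light set I would instead argue directly that ERM cannot gain by disagreeing with $c$ on light points: its loss there is dominated by the correctly labelled leakage it must also fit. Equivalently, I would compare the excess noisy risk $R_{D'}(h)-R_{D'}(c)$ with $\mathrm{dist}_D(h,c)$ through the lower bound
\[
\sum_j D(j)\,(1-2\eta_0)\,\mathbb{1}[h(j)\neq c(j)]\;-\;\eta_0\cdot(\text{cross terms}).
\]
I expect this comparison, which shows that the excess risk is $\Omega(\mathrm{dist}_D(h,c))$ with a constant independent of $D$, to be the crux of the proof. If it fails for adversarial $D$, the fallback is to shrink $\eta_0$ to a smaller absolute constant, for instance $\eta_0\le 1/(8N)$. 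That choice still keeps $k=O(\ln|\mathcal{X}|/(-\ln\mu))$, and it makes the cross terms negligible.
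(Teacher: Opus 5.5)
There is a genuine gap, and it sits exactly at the step you call the crux. The claim that the excess noisy risk $R_{D'}(h)-R_{D'}(c)$ is $\Omega(\mathrm{dist}_D(h,c))$ uniformly in $D$ is false for any measurement that can misidentify. No $\epsilon$-independent choice of $k$ rescues your ERM-on-noisy-pairs approach, and that includes your fallback $\eta_0\le 1/(8N)$.

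Here is a counterexample. Take $\mathcal{X}=\{\ket{\psi_1},\ket{\psi_2}\}$ non-orthogonal and $\mathcal{C}=2^{\mathcal{X}}$. Let $w:=W(2\mid 1)>0$; any measurement that always outputs a guess on two non-orthogonal states must err on at least one of them, so relabel if necessary. With $c=[0,1]$, $h=[0,0]$ and $D=(1-3\epsilon,\,3\epsilon)$,
\[
R_{D'}(h)-R_{D'}(c)=\mathbb{P}(\hat x=2,\,y=1)-\mathbb{P}(\hat x=2,\,y=0)=3\epsilon\,\bigl(1-W(1\mid 2)\bigr)-(1-3\epsilon)\,w ,
\]
which is negative, by an amount of order $w$, whenever $\epsilon<w/6$.

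The consequences are as follows:
\begin{itemize}
\item At the outcome $\hat x=2$ the leaked label outvotes the true one, so the Massart condition fails there.
\item $c$ is not even the Bayes predictor of the noisy problem.
\item ERM outputs $h$ with high probability however many samples it receives, yet $\mathrm{dist}_D(h,c)=3\epsilon>\epsilon$.
\end{itemize}
Shrinking $\eta_0$ to any constant only shrinks $w$ and moves the threshold on $\epsilon$. Making the leakage $O(\epsilon)$ would require $k\gtrsim \ln(1/\epsilon)/(-\ln\mu)$, which contradicts the claimed $k$.

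The missing idea is to make the measurement produce \emph{erasures} rather than \emph{errors}. The paper's argument runs as follows.
\begin{enumerate}
\item By Gershgorin's theorem, $\lambda_{\min}(G^{(k)})\ge 1-(N-1)\mu^k$, where $G^{(k)}_{ij}=\bra{\psi_i}\ket{\psi_j}^k$. So for $k=\ln(2N)/(-\ln\mu)$ the states $\ket{\psi_j}^{\otimes k}$ are linearly independent with $\lambda_{\min}(G^{(k)})\ge 1/2$.
\item Apply the unambiguous discrimination measurement of Lemma~\ref{lemma_unambiguous_discrimination}. Its inconclusive probability $1-\lambda_{\min}(G^{(k)})$ is the same for every state.
\item Conclusive outcomes are never wrong. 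Since the success probability does not depend on $x$, the retained pairs are i.i.d.\ exact samples $(x,c(x))$ with $x\sim D$.
\item A Chernoff bound turns $O(t)$ quantum examples into $t$ such pairs, except with probability $\delta$.
\item The noiseless classical learner then finishes the job. This is the algorithm of Theorem~\ref{proposition_sufficient} applied to the $k$-copy states, with $1/\lambda_{\min}(G^{(k)})\le 2$ as the only overhead.
\end{enumerate}
Your choice of $k$ in Step 1 is of the right order. What must be replaced is the use of a measurement with errors, and with it the noisy-ERM analysis of Step 2.
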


\begin{proof}
    This result follows trivially by combining two results. First, that 
    \begin{equation}\label{inequality_lambda}
        \lambda_{\min}( G^{(k)})\geq 1-(|\mathcal{X}|-1)  \max_{i,j: i\neq j}|\bra{\psi_i}\ket{\psi_j}|^k 
    \end{equation}
    where $G^{(k)}_{i,j}:= \bra{\psi_i} \ket{\psi_j}^k$. The inequality is obtained by applying Gershgorin’s circle theorem. Secondly, we apply the algorithm used in Theorem \ref{proposition_sufficient} to the pairs $(\ket{\phi}^{\otimes k},c(\ket{\phi}))$. This can be done by choosing, for instance,
    \begin{equation}
        k=\frac{\ln 2|\mathcal{X}|}{-\ln \max_{i,j: i\neq j}|\bra{\psi_i}\ket{\psi_j}| }
    \end{equation}
    so that the states $\ket{\psi_i}^{\otimes k}$ are linearly independent, as implied by \eqref{inequality_lambda}.
\end{proof}

This result suggests that for sufficiently large $k$, the sample complexity matches the classical value. However, this is not true. Even when $k$ is allowed to depend on $\epsilon$ and $\delta$, there exist concept classes that are not learnable despite having finite VC-dimension.

    \begin{proposition}
        There exist concept classes $\mathcal{C}$ with finite VC-dimension $d$ that cannot be PAC-learned using quantum examples, independently of the function $k:(0,1]^2 \rightarrow \mathbb{N}$ used.
    \end{proposition}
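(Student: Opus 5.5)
We will build a concept class with finite VC-dimension, a continuum domain, and opposite pairs whose overlap tends to one. The plan is to show that no finite number of copies can reach the required confidence. The natural route is to adapt the lower-bound argument of Proposition \ref{first_lower_bound} (equivalently Theorem \ref{theorem_lower_bound}) to the $k$-copy oracle.

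\textbf{Construction.} Take $\mathcal{X}=\{\ket{\psi_\theta}:\theta\in(0,\pi/2]\}$ with $\ket{\psi_\theta}=\cos\theta\ket{0}+\sin\theta\ket{1}$, together with the extra state $\ket{0}$. Let $\mathcal{C}$ consist of the threshold concepts $c_\theta$ that are $1$ exactly on $\{\ket{\psi_{\theta'}}:\theta'\geq\theta\}$. This class has VC-dimension $1$, so $d$ is finite. For any $\theta<\theta'$, the pair $\{\ket{\psi_\theta},\ket{\psi_{\theta'}}\}$ is not opposite for thresholds alone. To obtain opposite pairs we therefore use a class of VC-dimension $2$: the concepts $c_{\theta}$ and their complements. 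Then $\{\ket{\psi_\theta},\ket{\psi_{\theta+\eta}}\}$ is separated by $c_{\theta+\eta}$ and by its complement shifted appropriately; concretely, $1-c_{\theta+\eta}$ labels $\psi_\theta$ as $1$ and $\psi_{\theta+\eta}$ as $0$. So we have opposite pairs with fidelity $\cos^2\eta\to 1$.

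\textbf{Reduction to a two-hypothesis test.} Fix $\epsilon<1/2$, $\delta$, and any candidate learner with parameters $k=k(\epsilon,\delta)$ and $t=t(\epsilon,\delta)$. Both are finite numbers once $\epsilon$ and $\delta$ are fixed. Choose $D$ uniform on the two states $\ket{\psi_\theta},\ket{\psi_{\theta+\eta}}$, and the two concepts $c,c'$ of the opposite pair, which disagree on both points. Any hypothesis that is $\epsilon$-good for $c$ must agree with $c$ on both points, and likewise for $c'$. Hence a successful learner yields a test between the two sample states
\begin{equation}
\rho=\Big(\tfrac12 \ket{\psi_\theta}\bra{\psi_\theta}^{\otimes k}\otimes\ket{1}\bra{1}+\tfrac12\ket{\psi_{\theta+\eta}}\bra{\psi_{\theta+\eta}}^{\otimes k}\otimes\ket{0}\bra{0}\Big)^{\otimes t}
\end{equation}
and $\rho'$, the same state with the labels swapped. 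This test must have both error probabilities at most $\delta$.

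\textbf{Bounding the distinguishability.} The two single-sample states are block-diagonal in the label register. Their trace distance is therefore at most
\begin{equation}
\tfrac12\big\|\ket{\psi_\theta}\bra{\psi_\theta}^{\otimes k}-\ket{\psi_{\theta+\eta}}\bra{\psi_{\theta+\eta}}^{\otimes k}\big\|_1=\sqrt{1-\cos^{2k}\eta}.
\end{equation}
Because trace distance is subadditive under tensor products, the $t$-fold states are at trace distance at most $t\sqrt{1-\cos^{2k}\eta}$. Taking $\eta\to0$ makes this quantity smaller than $1-2\delta$. Helstrom's bound then forces some error probability to exceed $\delta$, which is a contradiction. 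Since $k$ and $t$ were fixed before $\eta$ was chosen, no function $k(\cdot,\cdot)$ works.

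\textbf{Main obstacle.} The delicate point is the order of quantifiers: the adversary chooses $D$ after the learner fixes $k(\epsilon,\delta)$ and $t(\epsilon,\delta)$. This step relies on the domain containing opposite pairs of arbitrarily high fidelity, so that $F=1$, while the VC-dimension stays finite. The class of thresholds together with their complements achieves exactly this. Everything else is the standard Helstrom argument.
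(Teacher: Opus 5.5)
Your proof is correct and takes essentially the paper's route. You exhibit a finite-VC class with opposite pairs whose overlap tends to one ($F=1$), then run the two-point Helstrom argument of Proposition~\ref{first_lower_bound} on the $k$-copy sample states, so that for fixed $k(\epsilon,\delta)$ and $t(\epsilon,\delta)$, sending the overlap to one contradicts the confidence requirement. The differences are minor: your explicit thresholds-plus-complements class (VC-dimension $2$) replaces the paper's complementary pair $\{c_1,c_2\}$ on all pure states (VC-dimension $1$), and your subadditivity bound $t\sqrt{1-\cos^{2k}\eta}$ is weaker than the paper's exact pure-state bound after conditioning on $N_0$, which gives $t=\Omega\bigl(\ln(1/\delta)/(-k\ln|\langle\psi|\psi'\rangle|^2)\bigr)$, but it still suffices for unlearnability.
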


    \begin{proof}
        For this proof, we start by using the fact that there exist concept classes such that the VC-dimension is finite and $F=1$. For instance, any concept class $\mathcal{C}=\{c_1,c_2\}$ such that $c_1= S(\mathcal{H})\setminus c_2$ and $c_2 \subset S(\mathcal{H})$ satisfies that the VC-dimension is $1$ and $F=1$, where $S(\mathcal{H})$ denotes the set of pure states in the Hilbert space $\mathcal{H}$. That is, there exists a sequence of opposite pairs $\{\ket{\psi_i}, \ket{\psi_i'}\}_{i\in \mathbb{N}}$ such that $|\bra{\psi_i}\ket{\psi_i'}|^2 \geq 1-\Delta_i$, with $\Delta_i \rightarrow 0$ as $i\rightarrow \infty$. Next, using Proposition~\ref{first_lower_bound}, for any function $k:(0,1]^2 \rightarrow \mathbb{N}$, we have that, in order to satisfy the PAC conditions on distributions restricted to the opposite states ${\ket{\psi_i}, \ket{\psi_i'}}$, the sample complexity must satisfy
        \begin{equation}
            t(\epsilon,\delta)\geq  \Omega\left(\frac{\ln \frac{1}{\delta}}{-k(\epsilon,\delta)\ln |\bra{\psi_i}\ket{\psi_i'}|^2}\right)
        \end{equation}
        Therefore, the sample complexity can be made arbitrarily large by taking $i$ sufficiently large, for any finite function $k(\epsilon,\delta)$.
    \end{proof}

    This is quite surprising, since even for large $k(\epsilon,\delta)$ there remains a substantial difference compared with the case where a classical description of the states is available (corresponding to $k=\infty$). In that regime, finite VC-dimension characterizes learnability of a concept class, whereas in our setting it is no longer sufficient.

    These discrepancies with the classical case arise from the fact that we require the algorithm to work for any distribution. Therefore, we can consider distributions whose support is concentrated on the boundaries of the concept classes, which implies that small errors due to the use of a finite number of copies $k$ significantly affect the performance of any algorithm. Consequently, such unlearnability results are expected to be rarer when considering variants of the PAC-learning setting in which the class of distributions is restricted.

    \section{Conclusions}
    \label{conclusions}

    In summary, we have shown that when using quantum examples ${(\ket{\phi},c(\ket{\phi}))}$ for PAC learning a concept class, the VC-dimension $d$ does not completely characterize the sample complexity, and therefore new quantities must be introduced. To characterize the quantum sample complexity, we first introduce a lower bound that captures some of the effects of having quantum examples instead of classical ones, via the function $\Lambda_{\mathcal{C}}(\alpha)$. This function captures the cases discussed in this paper where a concept class is not learnable, taking the value $\Lambda_{\mathcal{C}}(\alpha)=0$ for $\alpha>0$. We have also shown an upper bound on the quantum sample complexity when the states in the domain are linearly independent, which yields a similar result to the classical setting. Finally, we have discussed the scenario where multiple copies of the state are given in each quantum example, and, as before, we find that the VC-dimension $d$ does not fully characterize the sample complexity.
\bibliographystyle{ieeetr}
\bibliography{bibliography.bib}

\begin{thebibliography}{10}

\bibitem{valiant1984theory}
L.~G. Valiant, ``A theory of the learnable,'' {\em Communications of the ACM}, vol.~27, no.~11, pp.~1134--1142, 1984.

\bibitem{blumer1989learnability}
A.~Blumer, A.~Ehrenfeucht, D.~Haussler, and M.~K. Warmuth, ``Learnability and the vapnik-chervonenkis dimension,'' {\em Journal of the ACM (JACM)}, vol.~36, no.~4, pp.~929--965, 1989.

\bibitem{angluin1988learning}
D.~Angluin and P.~Laird, ``Learning from noisy examples,'' {\em Machine learning}, vol.~2, no.~4, pp.~343--370, 1988.

\bibitem{arunachalam2017guest}
S.~Arunachalam and R.~De~Wolf, ``Guest column: A survey of quantum learning theory,'' {\em ACM Sigact News}, vol.~48, no.~2, pp.~41--67, 2017.

\bibitem{bshouty1995learning}
N.~H. Bshouty and J.~C. Jackson, ``Learning dnf over the uniform distribution using a quantum example oracle,'' in {\em Proceedings of the eighth annual conference on Computational learning theory}, pp.~118--127, 1995.

\bibitem{ehrenfeucht1989general}
A.~Ehrenfeucht, D.~Haussler, M.~Kearns, and L.~Valiant, ``A general lower bound on the number of examples needed for learning,'' {\em Information and Computation}, vol.~82, no.~3, pp.~247--261, 1989.

\bibitem{vapnik2006estimation}
V.~Vapnik, {\em Estimation of dependences based on empirical data}.
\newblock Springer Science \& Business Media, 2006.

\bibitem{simon2015almost}
H.~U. Simon, ``An almost optimal pac algorithm,'' in {\em Conference on Learning Theory}, pp.~1552--1563, PMLR, 2015.

\bibitem{hanneke2016optimal}
S.~Hanneke, ``The optimal sample complexity of pac learning,'' {\em Journal of Machine Learning Research}, vol.~17, no.~38, pp.~1--15, 2016.

\bibitem{arunachalam2018optimal}
S.~Arunachalam and R.~De~Wolf, ``Optimal quantum sample complexity of learning algorithms,'' {\em The Journal of Machine Learning Research}, vol.~19, no.~1, pp.~2879--2878, 2018.

\bibitem{servedio2004equivalences}
R.~A. Servedio and S.~J. Gortler, ``Equivalences and separations between quantum and classical learnability,'' {\em SIAM Journal on Computing}, vol.~33, no.~5, pp.~1067--1092, 2004.

\bibitem{salmon2023provable}
W.~Salmon, S.~Strelchuk, and T.~Gur, ``Provable advantage in quantum pac learning,'' {\em arXiv preprint arXiv:2309.10887}, 2023.

\bibitem{aimeur2006machine}
E.~A{\"\i}meur, G.~Brassard, and S.~Gambs, ``Machine learning in a quantum world,'' in {\em Conference of the Canadian society for computational studies of intelligence}, pp.~431--442, Springer, 2006.

\bibitem{gambs2008quantum}
S.~Gambs, ``Quantum classification,'' {\em arXiv preprint arXiv:0809.0444}, 2008.

\bibitem{guctua2010quantum}
M.~Gu{\c{t}}{\u{a}} and W.~Kot{\l}owski, ``Quantum learning: asymptotically optimal classification of qubit states,'' {\em New Journal of Physics}, vol.~12, no.~12, p.~123032, 2010.

\bibitem{sentis2012quantum}
G.~Sent{\'\i}s, J.~Calsamiglia, R.~Munoz-Tapia, and E.~Bagan, ``Quantum learning without quantum memory,'' {\em Scientific reports}, vol.~2, no.~1, p.~708, 2012.

\bibitem{heidari2021theoretical}
M.~Heidari, A.~Padakandla, and W.~Szpankowski, ``A theoretical framework for learning from quantum data,'' in {\em 2021 IEEE international symposium on information theory (ISIT)}, pp.~1469--1474, IEEE, 2021.

\bibitem{heidari2024new}
M.~Heidari and W.~Szpankowski, ``New bounds on quantum sample complexity of measurement classes,'' in {\em 2024 IEEE International Symposium on Information Theory (ISIT)}, pp.~1515--1520, IEEE, 2024.

\bibitem{heidari2023learning}
M.~Heidari and W.~Szpankowski, ``Learning k-qubit quantum operators via pauli decomposition,'' in {\em International Conference on Artificial Intelligence and Statistics}, pp.~490--504, PMLR, 2023.

\bibitem{audenaert2007discriminating}
K.~M. Audenaert, J.~Calsamiglia, R.~Munoz-Tapia, E.~Bagan, L.~Masanes, A.~Acin, and F.~Verstraete, ``Discriminating states: The quantum chernoff bound,'' {\em Physical review letters}, vol.~98, no.~16, p.~160501, 2007.

\bibitem{topsoe2007some}
F.~Tops{\o}e, ``Some bounds for the logarithmic function,'' {\em Inequality theory and applications}, vol.~4, no.~01, 2007.

\bibitem{perez2025code}
J.~Pérez-Guijarro, ``Code and data: Analysis of the sample complexity for pac-learning functions defined over quantum states.'' \url{https://github.com/Jordi-Perez-Guijarro/code-sample-complexity-for-PAC-learning-over-quantum-states}.
\newblock July 8, 2026.

\bibitem{chefles1998unambiguous}
A.~Chefles, ``Unambiguous discrimination between linearly independent quantum states,'' {\em Physics Letters A}, vol.~239, no.~6, pp.~339--347, 1998.

\bibitem{mitzenmacher2017probability}
M.~Mitzenmacher and E.~Upfal, {\em Probability and computing: Randomization and probabilistic techniques in algorithms and data analysis}.
\newblock Cambridge university press, 2017.

\end{thebibliography}

\begin{appendices}

    \section{Proof of Proposition \ref{first_lower_bound}}\label{proof_lower_bound_F}

    In this appendix, we provide the proof of Proposition \ref{first_lower_bound}, restated here for completeness.

    \begin{proposition*}
        If $\mathcal{C}$ has at least a pair of opposite states, then the quantum sample complexity $t_{Q,\mathcal{C}}(\epsilon,\delta)$ has to satisfy, for $\epsilon \leq 1/8$ and $\delta \leq 1/100$:
            \begin{equation}
                c_L\left(\frac{d}{\epsilon} + \ln\frac{1}{\delta} \left(\frac{1}{\epsilon} + \frac{1}{-\ln F}\right)\right) \leq t_{Q,\mathcal{C}}(\epsilon,\delta)
            \end{equation}
        for some constant $c_L$ independent of $\mathcal{C}$, $F$ is the supremum of $|\bra{\psi}\ket{\psi'}|^2$ over all opposite pairs of quantum states $\{\ket{\psi}, \ket{\psi'}\}$ with respect to $\mathcal{C}$.
    \end{proposition*}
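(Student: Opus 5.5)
The bound splits into three terms, and we can use the fact that $\max\{a,b,c\}\geq (a+b+c)/3$. The terms $\frac{d}{\epsilon}$ and $\frac{\ln(1/\delta)}{\epsilon}$ come directly from the classical-description argument in Section~\ref{notation}: an oracle returning $(\mathrm{des}(\ket{\phi}),c(\ket{\phi}))$ is at least as informative as the quantum one, so the classical lower bound \eqref{sample_complexity_classical} applies to $t_{Q,\mathcal{C}}$. It therefore remains to show $t_{Q,\mathcal{C}}(\epsilon,\delta)\geq c\,\frac{\ln(1/\delta)}{-\ln F}$ for some absolute constant $c$. Afterwards we shrink $c_L$ so that the sum of the three terms is dominated by three times their maximum.

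For the new term, fix any opposite pair $\{\ket{\psi},\ket{\psi'}\}$ with witnesses $c,c'$, and write $f=|\bra{\psi}\ket{\psi'}|^2$. The plan is a two-hypothesis reduction.

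Take the distributions $D$, uniform on $\ket{\psi}$ and $\ket{\psi'}$, and $D'$, the same but with the two states swapped in role. Simpler still, take $D=D'$ uniform on the pair and compare concept $c$ against $c'$. Under $c$ a sample is $\ket{\psi}\ket{1}$ or $\ket{\psi'}\ket{0}$; under $c'$ it is $\ket{\psi}\ket{0}$ or $\ket{\psi'}\ket{1}$. These are orthogonal because of the label register, so this choice does not work.

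Instead, use point masses: $D$ is concentrated on $\ket{\psi}$ with concept $c$, and $D'$ is concentrated on $\ket{\psi'}$ with concept $c'$. Then every example is $\ket{\psi}\ket{1}$ in the first case and $\ket{\psi'}\ket{1}$ in the second. The hypothesis must satisfy $h(\ket{\psi})=1$ in the first case and $h(\ket{\psi'})=1$ in the second. This still does not distinguish the cases, since $h\equiv 1$ works for both.

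So mix the two: let $D$ put mass $1-2\epsilon$ on $\ket{\psi}$ and $2\epsilon$ on $\ket{\psi'}$, and let $D'$ do the reverse. Pair $D$ with $c$ and $D'$ with $c'$. The typical example is then $\ket{\psi}\ket{1}$ versus $\ket{\psi'}\ket{1}$, but the labels of the rare states differ and reveal the case; the rare states appear with constant probability only after order $1/\epsilon$ samples. Under $D$, the hypothesis must have $h(\ket{\psi})=1$, and under $D'$ it must have $h(\ket{\psi'})=1$. With the masses in this unequal arrangement this is still compatible with $h\equiv1$, so the mixture fails as well.

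The correct choice is $D=D'$ uniform on $\{\ket{\psi},\ket{\psi'}\}$, with $c$ versus the complement labeling $c'$. The two sample streams are $\ket{\psi}\ket{1},\ket{\psi'}\ket{0}$ versus $\ket{\psi}\ket{0},\ket{\psi'}\ket{1}$. Here the key realization is that the learner receives \emph{pure} examples, not their mixture: conditioned on the label, the unknown state is $\ket{\psi}$ or $\ket{\psi'}$. Given $t$ examples, with $N$ of them labeled $1$, the joint states for the two concepts are $\ket{\psi}^{\otimes N}\ket{\psi'}^{\otimes t-N}$ and $\ket{\psi'}^{\otimes N}\ket{\psi}^{\otimes t-N}$, up to ordering, and their squared overlap is $f^{t}$. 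Since $\mathrm{dist}_D(c,c')=1$, any $(\epsilon,\delta)$-learner with $\epsilon<1/2$ yields a discriminator between these pure states with error at most $\delta$ under each hypothesis. The Helstrom bound then gives
\[
\delta\geq \tfrac12\bigl(1-\sqrt{1-f^{t}}\bigr)\geq \tfrac14 f^{t},
\]
hence $t\geq \frac{\ln(1/(4\delta))}{-\ln f}$. Taking the supremum over opposite pairs replaces $f$ by $F$. Since $\delta\leq1/100$, we have $\ln(1/(4\delta))\geq c\ln(1/\delta)$.

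The main obstacle is exactly this discrimination step. We must argue carefully that the learner's whole procedure (measurement, internal randomness, and the classical label counts, which are identically distributed under both concepts) defines a two-outcome POVM on the product state. We must also check that averaging over $N$ does not help, because the Helstrom bound holds for each fixed $N$.
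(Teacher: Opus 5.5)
Your final argument is correct and is essentially the paper's proof: the classical terms come from the classical-description reduction, and the new term comes from taking $D$ uniform on the opposite pair, conditioning on the labels to get pure product states with squared overlap $f^{t}$, and applying the Helstrom bound. Your simplification $\tfrac12(1-\sqrt{1-f^t})\ge \tfrac14 f^t$, used in place of the paper's exact inversion, is fine.

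You should, however, delete the earlier paragraph that rejects this same choice as \emph{orthogonal because of the label register}. That claim is false: the label sequence has the same distribution under $c$ and $c'$. Only the non-orthogonal state registers differ, and that is exactly what your final paragraph relies on.
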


    \begin{proof}

        The term
        \begin{equation}
            c_L\left(\frac{d+\ln\frac{1}{\delta}}{\epsilon}\right)
        \end{equation}
        follows from the classical lower bound. For the other term, let $\ket{\psi_1}$ and $\ket{\psi_2}$ be a pair of opposite states. Let $\mathcal{A}$ be a learning algorithm for $\mathcal{C}$ that makes $t(\epsilon,\delta)$ oracle queries. Consider the uniform distribution $D$ over $\{\ket{\psi_1},\ket{\psi_2}\}$. For this distribution and for $\epsilon<1/2$, the algorithm must output the correct concept with probability at least $1-\delta$. Consequently, it must distinguish between two concepts, $c_1$ and $c_2$, that satisfy the conditions of an opposite pair. That is,
        \begin{align}
            &c_1(\ket{\psi_1})=1 \hspace{1cm} c_1(\ket{\psi_2})=0 \nonumber  \\ &  c_2(\ket{\psi_1})=0 \hspace{1cm} c_2(\ket{\psi_2})=1
        \end{align}
        Note that, for a given training set, this is equivalent to distinguishing between the states 
        \begin{equation}
            \ket{\phi_1}=\ket{\psi_1}^{\otimes N_0} \otimes \ket{\psi_2}^{\otimes t-N_0} \text{ and }\ket{\phi_2}=\ket{\psi_2}^{\otimes N_0} \otimes \ket{\psi_1}^{\otimes t-N_0}
        \end{equation}
        where $N_0$ denotes the number of quantum examples labeled $0$. For this particular realization of the oracle queries, it follows that the average probability of error must satisfy
        \begin{align}
            \mathbb{P}(\mathcal{E}|N_0)&\geq \frac{1}{2}\left(1-\frac{1}{2} \|\ket{\phi_1}\bra{\phi_1}- \ket{\phi_2}\bra{\phi_2} \|_1\right)\nonumber \\ & =\frac{1}{2}\left(1- \sqrt{1- |\bra{\psi_1} \ket{\psi_2}|^{2t}}\right)
        \end{align}
        Therefore, 
        \begin{align}
            \delta \geq \mathbb{E}_{N_0}[\mathbb{P}(\mathcal{E}|N_0)]&\geq \frac{1}{2}\left(1- \sqrt{1- |\bra{\psi_1} \ket{\psi_2}|^{2t}}\right)
        \end{align}
        which implies that, for $\delta\leq \frac{1}{2}$,
        \begin{equation}
            \frac{\ln \frac{1}{4\delta(1-\delta)}}{-\ln |\bra{\psi_1} \ket{\psi_2}|^2}\leq t 
        \end{equation}
    \end{proof}

    \section{Proof of Theorem \ref{theorem_lower_bound}}\label{appendix_lower_bound}

    In this appendix, we provide the proof of Theorem \ref{theorem_lower_bound}, restated here for completeness.

    \begin{theorem*}
    Let $\mathcal{C}$ be any concept class defined over a set of quantum states. Then, for $\epsilon \leq 1/8$ and $\delta \leq 1/100$,
    \begin{equation}
        c_L\left(\frac{d}{\epsilon}+\ln\frac{1}{\delta} \left(\frac{1}{\epsilon} + \frac{1}{\Lambda_{\mathcal{C} } (3\epsilon)}\right)\right) \leq t_{Q,\mathcal{C}}(\epsilon,\delta)
    \end{equation}
    for some constant $c_L$ independent of the concept class $\mathcal{C}$. The function $\Lambda_{\mathcal{C}}$ is defined by
    \begin{equation}
        \Lambda_{\mathcal{C}}(\alpha):= \inf_{\substack{D,D',c,c':\\ \mathrm{dist}_{D,D'}(c,c')> \,\alpha \\ d_{TV}(D,D')\leq \frac{\alpha}{6}}} C_Q (\sigma_D^c, \sigma_{D'}^{c'})
    \end{equation}
    where $\mathrm{dist}_{D,D'}(c,c'):= \min\{\mathrm{dist}_{D}(c,c'),\mathrm{dist}_{D'}(c,c')\} $, $d_{TV}(D,D')$ denotes the total variation distance, and 
    \begin{equation}
        C_Q (\sigma, \rho):= -\ln \min_{s\in [0,1]} \Tr (\sigma^{s} \rho^{1-s})
    \end{equation}
    is the quantum Chernoff distance.
    \end{theorem*}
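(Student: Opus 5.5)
The lower bound splits into two parts. The term $c_L\frac{d+\ln(1/\delta)}{\epsilon}$ comes straight from the classical-description argument in Section \ref{notation}. A learner with classical descriptions $\mathrm{des}(\ket{\phi})$ can simulate the quantum oracle, so the classical lower bound \eqref{sample_complexity_classical} also bounds $t_{Q,\mathcal{C}}$. Since the maximum of two quantities is at least half their sum, it remains to show
\begin{equation}
t_{Q,\mathcal{C}}(\epsilon,\delta)\geq c\,\frac{\ln\frac{1}{\delta}}{\Lambda_{\mathcal{C}}(3\epsilon)}.
\end{equation}

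For this term I would reduce learning to binary hypothesis testing between the product states $(\sigma_D^c)^{\otimes t}$ and $(\sigma_{D'}^{c'})^{\otimes t}$. Fix any admissible quadruple $(D,D',c,c')$, meaning $\mathrm{dist}_D(c,c')>3\epsilon$, $\mathrm{dist}_{D'}(c,c')>3\epsilon$ and $d_{TV}(D,D')\leq \epsilon/2$. The $t$ labeled quantum examples from $(D,c)$ form exactly the state $(\sigma_D^c)^{\otimes t}$, since the classical randomness of the draw is absorbed into the mixture. Likewise the examples from $(D',c')$ form $(\sigma_{D'}^{c'})^{\otimes t}$.

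Run the learner $\mathcal{A}$ to get $h$, then apply the test ``declare $(D,c)$ iff $\mathrm{dist}_D(h,c)\leq\epsilon$''. This is computable because $D$, $c$ and $c'$ are fixed and known to the tester. Under $(D,c)$ the test errs with probability at most $\delta$. Under $(D',c')$, with probability at least $1-\delta$ we have $\mathrm{dist}_{D'}(h,c')\leq\epsilon$. Then $\mathrm{dist}_D(h,c')\leq \epsilon+d_{TV}(D,D')\leq 3\epsilon/2$, and the triangle inequality gives
\begin{equation}
\mathrm{dist}_D(h,c)\geq \mathrm{dist}_D(c,c')-\mathrm{dist}_D(h,c') > 3\epsilon-\tfrac{3}{2}\epsilon>\epsilon,
\end{equation}
so the test correctly rejects $(D,c)$. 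This is where the constants $3\epsilon$ and $\alpha/6$ in the definition of $\Lambda_{\mathcal{C}}$ are used. The resulting POVM therefore has both type-I and type-II errors at most $\delta$, so its average (Bayes, equal priors) error is at most $\delta$.

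The key step is lower bounding this error in terms of the quantum Chernoff distance for finite $t$, not only asymptotically. The quantum Chernoff bound of \cite{audenaert2007discriminating} supplies the upper bound $\frac12 e^{-tC_Q}$ on the optimal error. For the lower bound I would use the fidelity-based estimate: the minimal error for $\rho^{\otimes t}$ versus $\sigma^{\otimes t}$ is at least $\frac{1}{2}\bigl(1-\sqrt{1-F(\rho,\sigma)^{t}}\bigr)\geq \frac14 F(\rho,\sigma)^t$. Here $F$ is the squared fidelity, which is multiplicative on tensor products, exactly as in the proof of Proposition \ref{first_lower_bound}. This gives $t\geq \ln\frac{1}{4\delta}/(-\ln F(\sigma_D^c,\sigma_{D'}^{c'}))$, and I would then need $-\ln F\leq c'\,C_Q$.

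This conversion is the step I expect to be the main obstacle. One direction is easy: taking $s=1/2$ gives $\Tr\sqrt\sigma\sqrt\rho\leq\sqrt{F}$, which yields $C_Q\geq -\tfrac12\ln F$. That is the wrong direction. Instead I would invoke the known finite-copy converse of the quantum Chernoff bound from Nussbaum–Szkoła and Audenaert et al. It states that the optimal error for $t$ copies is at least $c_1 e^{-t C_Q - O(\sqrt t)}$, or, in a cleaner single-letter form, that $\min_s\Tr(\rho^s\sigma^{1-s})\leq \sqrt{F}^{\,\kappa}$ for a universal $\kappa$; the latter follows from the Fuchs–van de Graaf type relation $\Tr\rho^s\sigma^{1-s}\leq$ fidelity. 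If neither route gives a clean constant, I would fall back on the Nussbaum–Szkoła classical reduction. It turns the quantum problem into a classical test between two distributions $P,Q$ with the same Chernoff information, whose error for $t$ copies is bounded below by $\frac14\bigl(\sum\sqrt{PQ}\bigr)^{2t}$. That quantity is at least $\frac14 e^{-2tC_Q}$ up to constants. Combining this with error $\leq\delta$ yields $t\geq c\ln\frac{1}{\delta}/C_Q(\sigma_D^c,\sigma_{D'}^{c'})$. Finally, taking the infimum over admissible quadruples gives the bound with $\Lambda_{\mathcal{C}}(3\epsilon)$, and absorbing constants finishes the proof.
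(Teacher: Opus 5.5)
Your proposal is correct and follows the paper's proof. It uses the same reduction of $(\epsilon,\delta)$-learning to discriminating $(\sigma_D^c)^{\otimes t}$ from $(\sigma_{D'}^{c'})^{\otimes t}$ with error at most $\delta$; your one-sided test and your bound $|\mathrm{dist}_D-\mathrm{dist}_{D'}|\le d_{TV}$ are a slightly cleaner version of the paper's. It then applies a Fuchs--van de Graaf type bound, which the paper writes as $\frac12\|(\sigma_D^c)^{\otimes t}-(\sigma_{D'}^{c'})^{\otimes t}\|_1\le\sqrt{1-Q^{2t}}$ with $Q=e^{-C_Q}$ (eq.~(6) of Audenaert et al.), giving $t\ge\ln\frac{1}{4\delta(1-\delta)}/(2C_Q)$.

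The step you flag as the main obstacle is already settled by the inequality you dismissed. $C_Q\ge-\frac12\ln F$ is the same as $-\ln F\le 2C_Q$, which is exactly the direction you need; your ``single-letter form'' with $\kappa=1$ is this same statement. So the Nussbaum--Szko\l{}a fallback, though valid, is unnecessary; it relies on the same fact $\sum\sqrt{PQ}=\Tr\sqrt{\rho}\sqrt{\sigma}\ge e^{-C_Q}$. The $e^{-tC_Q-O(\sqrt t)}$ converse should be dropped, since its correction term depends on the states and is not uniform over the infimum defining $\Lambda_{\mathcal{C}}$.
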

    
    \begin{proof}
        If the learning algorithm $\mathcal{A}$ learns $\mathcal{C}$ using $t(\epsilon,\delta)$ examples, then, with probability at least $1-\delta$, it outputs a hypothesis $h$ such that
        \begin{equation}
            \mathrm{dist}_{D}(h,c)\leq \epsilon,
        \end{equation}
        where $c$ is the ground-truth concept and $D$ is the underlying distribution. Similarly,
        \begin{equation}
            \mathrm{dist}_{D'}(h,c')\leq \epsilon
         \end{equation}
        when $c'$ is the ground-truth concept and $D'$ is the underlying distribution. Therefore, if $\mathrm{dist}_{D,D'}(c,c'):=\min\{\mathrm{dist}_{D}(c,c'),\mathrm{dist}_{D'}(c,c')\}>\,3\epsilon$, and $d_{TV}(D,D')\leq \epsilon/2$, then, when $c$ is the ground-truth concept and $D$ is the underlying distribution,
        \begin{align}
            & \mathrm{dist}_D(h,c)\leq \epsilon \hspace{2cm}\mathrm{dist}_D(h,c') > 2\epsilon \nonumber \\ & \mathrm{dist}_{D'}(h,c)\leq 2\epsilon \hspace{1.75cm} \mathrm{dist}_{D'}(h,c') > \epsilon 
        \end{align}
        hold with probability at least $1-\delta$. The second inequality follows from
        \begin{equation}
            3\epsilon<\mathrm{dist}_{D,D'}(c,c')\leq \mathrm{dist}_{D}(c,c') \leq \mathrm{dist}_{D}(c,h) + \mathrm{dist}_{D}(h,c') \leq \epsilon + \mathrm{dist}_{D}(h,c')
        \end{equation}
        The third and fourth inequalities are a consequence of the bound
        \begin{equation}
            \left|\mathrm{dist}_{D}(c_1,c_2)-\mathrm{dist}_{D'}(c_1,c_2)\right|\leq 2d_{TV}(D,D') 
        \end{equation}
        for any pair of concepts $c_1, c_2$. Similarly, when $c'$ is the ground-truth concept and $D'$ is the underlying distribution,
        \begin{align}
            & \mathrm{dist}_{D'}(h,c')\leq \epsilon \hspace{2cm} \mathrm{dist}_{D'}(h,c) > 2\epsilon \nonumber \\ & \mathrm{dist}_{D}(h,c')\leq 2\epsilon \hspace{1.9cm} \mathrm{dist}_{D}(h,c) > \epsilon 
        \end{align}
        Consequently, when $c$ is the ground-truth concept and $D$ the underlying distribution, with probability at least $1-\delta$,
        \begin{equation}
            \mathrm{dist}_{D,D'}(h,c)\leq \epsilon \text{ and } \mathrm{dist}_{D,D'}(h,c')> \epsilon
        \end{equation}
        Similarly, when $c'$ is the ground-truth concept and $D'$ the underlying distribution, with probability at least $1-\delta$,
        \begin{equation}
            \mathrm{dist}_{D,D'}(h,c')\leq \epsilon \text{ and } \mathrm{dist}_{D,D'}(h,c)> \epsilon
        \end{equation}
        Therefore, the algorithm $\mathcal{A}$ can be used to discriminate between the states ${\sigma_{D}^{c}}^{\otimes t}$ and ${\sigma_{D'}^{c'}}^{\otimes t}$ with probability of error at most $\delta$. Hence,
        \begin{align}
            \delta &\geq \frac{1}{2} \left( 1 -\frac{1}{2} \left\| (\sigma_{D}^{c}) ^{\otimes t}-(\sigma_{D'}^{c'}) ^{\otimes t} \right\|_1 \right) \nonumber \\  &\geq  \frac{1}{2} \left( 1- \sqrt{1- Q(\sigma_{D}^{c},\sigma_{D'}^{c'})^{2t}} \right) 
        \end{align}
        where $Q(\rho,\sigma ):= \min_{s\in [0,1]} \Tr (\sigma^{s} \rho^{1-s})$. The second inequality uses (6) from \cite{audenaert2007discriminating}. Consequently, for $\delta<1/2$,
        \begin{equation}\label{equation_final}
            \frac{\ln \frac{1}{4\delta(1-\delta)}}{-2 \ln Q(\sigma_{D}^c,\sigma_{D'}^{c'})}\leq t(\epsilon,\delta)
        \end{equation}
        Finally, since \eqref{equation_final} holds for all $D,D'$ and $c,c'$ satisfying $\mathrm{dist}_{D,D'}(c,c')>\,3\epsilon$, and $d_{TV}(D,D')\leq \epsilon/2$ then 
        \begin{equation}
            \frac{\ln \frac{1}{4\delta(1-\delta)}}{2\Lambda_{\mathcal{C}}(3\epsilon)}\leq t(\epsilon,\delta)
        \end{equation}
        for $\delta\leq\frac{1}{2}$.
    \end{proof}

    \section{Proof of Proposition \ref{proposition_example}}
    \label{proof_of_proposition_4}

        In this appendix, we provide the proof of Proposition \ref{proposition_example}, restated here for completeness.

        \begin{proposition*}
            For the concept class $\mathcal{C}:= 2^{\mathcal{X}}$, where $\mathcal{X}=\{\ket{\psi_0},\ket{\psi_1}\}$,
            \begin{equation}
                \min \{-\ln F , \alpha (1-F)\}\leq \Lambda_{\mathcal{C}}(\alpha) \leq   -\ln F
            \end{equation}
        \end{proposition*}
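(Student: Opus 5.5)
We will treat the two inequalities separately. Throughout, $F=|\bra{\psi_0}\ket{\psi_1}|^2$, since $\{\ket{\psi_0},\ket{\psi_1}\}$ is the only opposite pair of this class. We write $D=(p,1-p)$ and $D'=(q,1-q)$ for the probabilities of $(\ket{\psi_0},\ket{\psi_1})$. The key structural fact is that every $\sigma_D^c$ is block diagonal with respect to the label register:
\begin{equation}
\sigma_D^c=\sum_{b\in\{0,1\}} A_b\otimes\ket{b}\bra{b},
\end{equation}
so that $\Tr\big((\sigma_D^c)^s(\sigma_{D'}^{c'})^{1-s}\big)=\sum_b \Tr(A_b^s B_b^{1-s})$, where $B_b$ denotes the corresponding blocks of $\sigma_{D'}^{c'}$.

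\textbf{Upper bound.} We choose $D=D'$ uniform, $c=\{\ket{\psi_0}\}$ and $c'=\{\ket{\psi_1}\}$. Then $\mathrm{dist}_{D,D'}(c,c')=1>\alpha$ and $d_{TV}=0$, so this choice is feasible in the infimum defining $\Lambda_{\mathcal{C}}(\alpha)$. Each label block contains one rank-one projector on each side, with weight $1/2$ each, so
\begin{equation}
\Tr(\sigma^s\rho^{1-s})=\tfrac12 F+\tfrac12 F=F
\end{equation}
for every $s\in[0,1]$. Hence $C_Q=-\ln F$, and therefore $\Lambda_{\mathcal{C}}(\alpha)\le -\ln F$.

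\textbf{Lower bound.} Take any feasible $(D,D',c,c')$. Since $\mathrm{dist}>\alpha\ge 0$, we have $c\neq c'$, and we split into cases according to where $c$ and $c'$ disagree.

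\emph{Case 1: they disagree on both states.} If one concept is $\emptyset$ and the other is $\mathcal{X}$, the two states live in different label blocks. The trace is then $0$, so $C_Q=\infty$. Otherwise $\{c,c'\}=\{\{\ket{\psi_0}\},\{\ket{\psi_1}\}\}$. Each block then pairs a multiple of $\ket{\psi_0}\bra{\psi_0}$ with a multiple of $\ket{\psi_1}\bra{\psi_1}$, which gives
\begin{equation}
\Tr(\sigma^s\rho^{1-s})=F\big[p^s(1-q)^{1-s}+(1-p)^s q^{1-s}\big]\le F .
\end{equation}
The last step is Hölder's inequality, $\sum_i a_i^s b_i^{1-s}\le(\sum_i a_i)^s(\sum_i b_i)^{1-s}$. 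Hence $C_Q\ge -\ln F$.

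\emph{Case 2: they disagree on exactly one state.} Say the disagreement is on $\ket{\psi_0}$, with common label $b$ on $\ket{\psi_1}$ and, without loss of generality, $c(\ket{\psi_0})=b$. The constraint $\mathrm{dist}_{D,D'}>\alpha$ forces $p>\alpha$ and $q>\alpha$. In block $b$ we have $\sigma$-block $S=p\ket{\psi_0}\bra{\psi_0}+(1-p)\ket{\psi_1}\bra{\psi_1}$ and $\rho$-block $(1-q)\ket{\psi_1}\bra{\psi_1}$. The other block has a zero $\sigma$-part, so it contributes nothing. Thus
\begin{equation}
\Tr(\sigma^s\rho^{1-s})=(1-q)^{1-s}\bra{\psi_1}S^s\ket{\psi_1}\le(1-q)^{1-s}\big(\bra{\psi_1}S\ket{\psi_1}\big)^s ,
\end{equation}
where the inequality is Jensen's inequality for the concave map $x\mapsto x^s$ applied to the spectral measure of $S$ in the state $\ket{\psi_1}$. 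Since $\bra{\psi_1}S\ket{\psi_1}=1-p(1-F)$, the right-hand side is at most $\max\{1-q,\,1-p(1-F)\}\le 1-\alpha(1-F)$. Finally, $-\ln(1-x)\ge x$ gives $C_Q\ge \alpha(1-F)$.

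Combining the cases, every feasible tuple has $C_Q\ge\min\{-\ln F,\alpha(1-F)\}$, and taking the infimum proves the lower bound. Note that the $d_{TV}$ constraint is never needed here. The step requiring the most care is Case 2, where the non-commuting $\sigma$-block has to be handled; the Jensen/spectral-measure bound is the intended route, with the Araki--Lieb--Thirring inequality as a fallback if that estimate needs adjusting.
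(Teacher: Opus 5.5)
Your proof is correct and follows essentially the same route as the paper: you split $\sigma_D^c$ into blocks by label, dismiss the $\emptyset$ versus $\mathcal{X}$ pair as orthogonal, bound the pairs that disagree on one state by $-\ln(1-\alpha(1-F))\geq\alpha(1-F)$, and show that the two singleton concepts give exactly $-\ln F$, which with $D=D'$ uniform also gives the upper bound. The differences are minor: the paper drops the $(1-q)^{1-s}$ factor and evaluates $\bra{\psi}S^s\ket{\psi}$ at $s=1$ rather than using Jensen, and it asserts without proof that $p^s(1-q)^{1-s}+(1-p)^s q^{1-s}\leq 1$, which you justify via Hölder.
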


        \begin{proof}
        For convenience, we label the concepts in the class as $\mathcal{C}={c_0,c_1,c_2,c_3}$, where the corresponding concepts are $c_0=[0,0]$, $c_1=[1,1]$, $c_2=[0,1]$, and $c_3=[1,0]$. The first coordinate denotes the value $c(\ket{\psi_0})$, and the second coordinate denotes the value $c(\ket{\psi_1})$.

        For the lower bound, we use the fact that
        \begin{equation}
            \inf_{\substack{D,D',c,c':\\ \mathrm{dist}_{D,D'}(c,c')> \,\alpha}} C_Q (\sigma_D^{c}, \sigma_{D'}^{c'}) \leq \Lambda_{\mathcal{C}}(\alpha)
        \end{equation}
        Next, note that
        \begin{equation}
            \inf_{\substack{D,D':\\ \mathrm{dist}_{D,D'}(c_0,c_1)> \,\alpha}} C_Q (\sigma_D^{c_0}, \sigma_{D'}^{c_1})= \infty
        \end{equation}
        as the states are orthogonal. Now, let us consider the pair $c_0$ and $c_2$,
        \begin{align}
            \inf_{\substack{D,D':\\ \mathrm{dist}_{D,D'}(c_0,c_2)> \,\alpha}} C_Q (\sigma_D^{c_0}, \sigma_{D'}^{c_2})= \inf_{\substack{D,D':\\ \mathrm{dist}_{D,D'}(c_0,c_2)> \,\alpha}} -\ln \min_{s\in [0,1]}  \Tr ((\sigma_D^{c_0})^{s} (\sigma_{D'}^{c_2})^{1-s}) 
        \end{align}
        where
        \begin{equation}
            (\sigma_D^{c_0})^s=\Big( p_0 \ket{\psi_0} \bra{\psi_0} + (1-p_0) \ket{\psi_1} \bra{\psi_1}\Big)^s \otimes \ket{0} \bra{0}
        \end{equation}

        \begin{equation}
            (\sigma_{D'}^{c_2})^{1-s}=\tilde{p}_0^{1-s} \ket{\psi_0} \bra{\psi_0}\otimes \ket{0} \bra{0} + (1-\tilde{p}_0)^{1-s} \ket{\psi_1} \bra{\psi_1}\otimes \ket{1} \bra{1} 
        \end{equation}
        Therefore, 
        \begin{align}
            \Tr ((\sigma_D^{c_0})^{s} (\sigma_{D'}^{c_2})^{1-s})&= \tilde{p}_0^{1-s}   \bra{\psi_0} \Big( p_0 \ket{\psi_0} \bra{\psi_0} + (1-p_0) \ket{\psi_1} \bra{\psi_1}\Big)^s \ket{\psi_0} \nonumber \\& \leq \bra{\psi_0} \Big( p_0 \ket{\psi_0} \bra{\psi_0} + (1-p_0) \ket{\psi_1} \bra{\psi_1}\Big)^s \ket{\psi_0} 
        \end{align}
        Hence, taking the minimum over $s\in[0,1]$ on both sides,
        \begin{align}
            \min_{s\in [0,1]}  \Tr ((\sigma_D^{c_0})^{s} (\sigma_{D'}^{c_2})^{1-s})&\leq \min_{s\in [0,1]}\bra{\psi_0} \Big( p_0 \ket{\psi_0} \bra{\psi_0} + (1-p_0) \ket{\psi_1} \bra{\psi_1}\Big)^s \ket{\psi_0} \nonumber \\&= p_0 + (1-p_0) |\bra{\psi_0}\ket{\psi_1}|^2 \nonumber \\&= p_0(1- |\bra{\psi_0}\ket{\psi_1}|^2) + |\bra{\psi_0}\ket{\psi_1}|^2
        \end{align}
        Finally, taking the supremum over the distributions, namely over $0\leq p_0<1-\alpha$, yields
        \begin{align}
            \inf_{\substack{D,D':\\ \mathrm{dist}_{D,D'}(c_0,c_2)> \,\alpha}} C_Q (\sigma_D^{c_0}, \sigma_{D'}^{c_2})  & \geq -\ln \left( (1-\alpha) (1-|\bra{\psi_0}\ket{\psi_1}|^2)+|\bra{\psi_0}\ket{\psi_1}|^2\right) \nonumber \\ &\geq - \left( (1-\alpha) (1-|\bra{\psi_0}\ket{\psi_1}|^2)+|\bra{\psi_0}\ket{\psi_1}|^2\right)+1 \nonumber \\ & = \alpha (1- |\bra{\psi_0}\ket{\psi_1}|^2)
        \end{align}
        By symmetry, the same argument applies to the pairs $\{c_0,c_3\}$, $\{c_1,c_2\}$, and $\{c_1,c_3\}$. Therefore, the only remaining pair to consider is $\{c_2,c_3\}$. For this pair, 
        \begin{equation}
            (\sigma_D^{c_2})^{s}={p}_0^{s} \ket{\psi_0} \bra{\psi_0}\otimes \ket{0} \bra{0} + (1-{p}_0)^{s} \ket{\psi_1} \bra{\psi_1}\otimes \ket{1} \bra{1} 
        \end{equation}
        \begin{equation}
            (\sigma_{D'}^{c_3})^{1-s}=\tilde{p}_0^{1-s} \ket{\psi_0} \bra{\psi_0}\otimes \ket{1} \bra{1} + (1-\tilde{p}_0)^{1-s} \ket{\psi_1} \bra{\psi_1}\otimes \ket{0} \bra{0} 
        \end{equation}
        Therefore,
        \begin{equation}
            \Tr ((\sigma_D^{c_2})^{s} (\sigma_{D'}^{c_3})^{1-s})= \Big(p_0^s  (1-\tilde{p}_0)^{1-s}+ (1-{p}_0)^{s}\tilde{p}_0^{1-s}\Big) |\bra{\psi_0}\ket{\psi_1}|^2
        \end{equation}
        The maximum value of the multiplicative factor is attained by choosing $p_0=1-\tilde{p}_0=1/2$, which yields
        \begin{equation}\label{equality_fidelity}
            \inf_{\substack{D,D':\\ \mathrm{dist}_{D,D'}(c_2,c_3)> \,\alpha}} C_Q (\sigma_D^{c_2}, \sigma_{D'}^{c_3})= - \ln |\bra{\psi_0}\ket{\psi_1}|^2
        \end{equation}
        Hence, 
        \begin{equation}
            \Lambda_{\mathcal{C}}(\alpha)\geq \min \{\alpha (1-F),-\ln F\}
        \end{equation}
        Since the optimal distributions for the pair $\{c_2,c_3\}$ satisfy $d_{\mathrm{TV}}(D,D')\leq \alpha/6$, we obtain the upper bound
        \begin{equation}
            \Lambda_{\mathcal{C}}(\alpha)\leq -\ln F
        \end{equation}
    \end{proof}

    \section{Proof of Proposition \ref{Proposition_7}}
    \label{proof_of_proposition_7}

    In this appendix, we provide the proof of Proposition \ref{Proposition_7}, restated here for completeness.

    \begin{proposition*}
        There exists a concept class $\mathcal{C}$ that is learnable using quantum examples such that
        \begin{equation}
            \Lambda_\mathcal{C}(\alpha)\leq \alpha^2.
        \end{equation}
        It follows that the sample complexity of this concept class is lower bounded by $\Omega(1/\epsilon^2)$.
    \end{proposition*}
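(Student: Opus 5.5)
The plan is to exhibit an explicit two-concept class on six qubit states. Its two concepts differ only in how they label $\ket{+}$ and $\ket{-}$, and the remaining states are chosen so that both label blocks of $\sigma_D^c$ are full rank. The difference between the two concepts then appears only as a small perturbation of full-rank states, which should give a quadratic Chernoff distance. Concretely, let $\mathcal{X}=\{\ket{0},\ket{1},\ket{+},\ket{-},\ket{+y},\ket{-y}\}$ and $\mathcal{C}=\{c,c'\}$ with
\begin{equation}
c=\{\ket{0},\ket{1},\ket{+}\},\qquad c'=\{\ket{0},\ket{1},\ket{-}\}.
\end{equation}
Both concepts label $\ket{\pm y}$ with $0$, so $c$ and $c'$ differ exactly on $\{\ket{+},\ket{-}\}$.

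\textbf{Step 1 (upper bound on $\Lambda_{\mathcal{C}}$).} Take $D=D'$, so the total-variation constraint holds trivially. Put mass $w$ on each of $\ket{0},\ket{1},\ket{+y},\ket{-y}$ and mass $\beta$ on each of $\ket{\pm}$, with $4w+2\beta=1$ and $2\beta>\alpha$ (for instance $\beta=\alpha/2+\eta$ with $\eta\downarrow 0$). Then $\mathrm{dist}_{D,D'}(c,c')=2\beta>\alpha$. Using $\ket{0}\bra{0}+\ket{1}\bra{1}=\ket{+y}\bra{+y}+\ket{-y}\bra{-y}=I$, the states become
\begin{align}
\sigma_D^c&=(wI+\beta\ket{+}\bra{+})\otimes\ket{1}\bra{1}+(wI+\beta\ket{-}\bra{-})\otimes\ket{0}\bra{0},\\
\sigma_D^{c'}&=(wI+\beta\ket{-}\bra{-})\otimes\ket{1}\bra{1}+(wI+\beta\ket{+}\bra{+})\otimes\ket{0}\bra{0}.
\end{align}
Both are full rank with eigenvalues at least $w\approx 1/4$, and they differ by $\beta X\otimes Z$ up to sign. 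I will bound $C_Q$ from above at $s=1/2$, using $C_Q\le -\ln\Tr(\sigma^{1/2}\rho^{1/2})$. Because the two blocks commute with $X$, everything is diagonal in the $\ket{\pm}$ basis, so the computation is scalar. Writing $a=w+\beta$, each block contributes $2\sqrt{wa}$, hence
\begin{equation}
\Tr\bigl((\sigma_D^c)^{1/2}(\sigma_D^{c'})^{1/2}\bigr)=4\sqrt{w(w+\beta)}.
\end{equation}
Therefore
\begin{equation}
C_Q\le -\ln\bigl(4\sqrt{w(w+\beta)}\bigr)=-\ln\frac{2\sqrt{(1-2\beta)(1+2\beta)}}{2}=-\tfrac12\ln(1-4\beta^2)=O(\beta^2)=O(\alpha^2).
\end{equation}
The remaining task is to check the constant: either show $-\tfrac12\ln(1-\alpha^2)\le\alpha^2$ directly (true for $\alpha\le 1/2$, via $-\ln(1-x)\le x/(1-x)$), or rescale the weights on $\ket{\pm}$ if needed. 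Letting $\eta\to0$ gives $\Lambda_{\mathcal{C}}(\alpha)\le\alpha^2$.

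\textbf{Step 2 (learnability).} For an arbitrary $D$ with masses $p_\pm$ on $\ket{\pm}$, we have $\mathrm{dist}_D(c,c')=p_++p_-$, and
\begin{equation}
\sigma_D^c-\sigma_D^{c'}=(p_+\ket{+}\bra{+}-p_-\ket{-}\bra{-})\otimes(\ket{1}\bra{1}-\ket{0}\bra{0}),
\end{equation}
whose trace norm is $2(p_++p_-)$. The learner runs the Helstrom (or a Chernoff-optimal) test between $(\sigma_D^c)^{\otimes t}$ and $(\sigma_D^{c'})^{\otimes t}$ and outputs the winning concept. This requires knowing $D$, which the learner does not. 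I will handle it by noting that the test only needs the labeled-block structure: measure every label-$1$ state and every label-$0$ state in the $\{\ket{+},\ket{-}\}$ basis. Under $c$, label-$1$ samples yield outcome $+$ and label-$0$ samples yield $-$ with elevated frequency, and the reverse holds under $c'$. Taking a majority vote over the signed outcome counts gives a classical bias of order $p_++p_-$. If $p_++p_-\le\epsilon$, either output is $\epsilon$-good. Otherwise a Hoeffding bound shows $t=O(\ln(1/\delta)/\epsilon^2)$ samples suffice. The part I expect to need the most care is checking that the bias of this $D$-independent statistic is really $\Omega(p_++p_-)$ and is not washed out by the fixed $\ket{0},\ket{1},\ket{\pm y}$ backgrounds. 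Those backgrounds give unbiased $\pm$ outcomes, which suggests the bias survives.

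\textbf{Step 3 (sample-complexity consequence).} Substituting $\Lambda_{\mathcal{C}}(3\epsilon)\le 9\epsilon^2$ into Theorem \ref{theorem_lower_bound} gives $t_{Q,\mathcal{C}}(\epsilon,\delta)\ge c_L\ln(1/\delta)/(9\epsilon^2)=\Omega(1/\epsilon^2)$.
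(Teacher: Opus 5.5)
Your construction differs from the paper's and does work, but Step 1 is justified by an inequality that points the wrong way. Since $\min_{s}\Tr(\sigma^s\rho^{1-s})\le\Tr(\sigma^{1/2}\rho^{1/2})$, evaluating at $s=1/2$ gives $C_Q(\sigma,\rho)\ge-\ln\Tr(\sigma^{1/2}\rho^{1/2})$. That is a \emph{lower} bound on $C_Q$, so it cannot bound $\Lambda_{\mathcal C}$ from above. What you need is a lower bound on $\Tr(\sigma^s\rho^{1-s})$ that holds for every $s$. In your commuting setting this is one line. With $a=w+\beta$, AM--GM gives $\Tr\bigl((\sigma_D^c)^s(\sigma_D^{c'})^{1-s}\bigr)=2\bigl(a^sw^{1-s}+w^sa^{1-s}\bigr)\ge4\sqrt{aw}$. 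Hence the minimum is at $s=1/2$ and $C_Q=-\tfrac12\ln(1-4\beta^2)$ exactly. The paper needs the same fact for its non-commuting blocks and gets it from $f(s)=f(1-s)$ plus convexity.

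A smaller point: your class satisfies $\Lambda_{\mathcal C}(\alpha)\le\alpha^2$ only for $\alpha$ away from $1$. Testing with $\ket{+}\bra{+}\otimes\ket{1}\bra{1}+\ket{-}\bra{-}\otimes\ket{0}\bra{0}$ (swap $\ket{\pm}$ for the reversed pair) gives $\tfrac12\|\sigma_D^c-\sigma_{D'}^{c'}\|_1\ge\tfrac12\bigl(\mathrm{dist}_D(c,c')+\mathrm{dist}_{D'}(c,c')\bigr)>\alpha$ for every admissible pair. By Fuchs--van de Graaf, your value $\Lambda_{\mathcal C}(\alpha)=-\tfrac12\ln(1-\alpha^2)$ is therefore exact, and it exceeds $\alpha^2$ for $\alpha$ close to $1$; no reweighting fixes this. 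State the claim for, say, $\alpha\le1/\sqrt2$. That is all Step 3 uses, since $\alpha=3\epsilon\le3/8$. The paper's example gives $-\ln(1-\alpha^2/2)\le\alpha^2$ on all of $(0,1)$.

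With that repair, your route is genuinely different and in one respect stronger. The paper's concepts $c_1=\{\ket{+},\ket{1}\}$ and $c_2=\{\ket{+y},\ket{1}\}$ disagree on a non-orthogonal pair, so the label blocks do not commute. This forces a Bloch-vector computation of $\Tr(\sqrt{A}\sqrt{B})$. Its learner must also estimate $X$ and $Y$ expectations of the label-conditioned states with consistency thresholds, using $O(\epsilon^{-3}\ln(1/\delta))$ examples.

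Your concepts disagree on the orthogonal pair $\ket{\pm}$, while $\ket{0},\ket{1},\ket{\pm y}$ are all unbiased in the $X$ basis. So the issue you flagged in Step 2 resolves exactly: the signed vote has mean $\pm(p_++p_-)=\pm\mathrm{dist}_D(c,c')$. Hoeffding then gives success with $t=2\epsilon^{-2}\ln(1/\delta)$ whenever $\mathrm{dist}_D(c,c')>\epsilon$. Together with Step 3, this yields $t_{Q,\mathcal C}(\epsilon,\delta)=\Theta(\epsilon^{-2}\ln(1/\delta))$ for your class, a matching upper bound that the paper's example does not provide. What the paper's example shows instead is that quadratic behaviour of $\Lambda_{\mathcal C}$ also occurs when the concepts disagree only on non-orthogonal states, and over the full range of $\alpha$.
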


    \begin{proof}
        Let $\mathcal{C}:=\{c_1,c_2\}$ be a concept class defined on the domain $\{\ket{0},\ket{+},\ket{+y},\ket{1}\}$, where $c_1=\{\ket{+},\ket{1}\}$, $c_2=\{\ket{+y},\ket{1}\}$, and $\ket{+y}$ denotes the +1 eigenstate of the Pauli $Y$ matrix. Let $D=D'=[\frac{1-\alpha}{2},\frac{\alpha}{2},\frac{\alpha}{2}, \frac{1-\alpha}{2}]$, where the entries correspond to the probabilities assigned to $\ket{0},\ket{+},\ket{+y},\ket{1}$, respectively. Consequently,
        \begin{align}
            \sigma_{D}^{c_1}= \frac{1}{2}&\Big( (1-\alpha) \ket{0}\bra{0}\otimes \ket{0}\bra{0}+\alpha \ket{+}\bra{+}\otimes \ket{1}\bra{1}\nonumber \\ &+\alpha \ket{+y}\bra{+y}\otimes \ket{0}\bra{0}+ (1-\alpha) \ket{1}\bra{1}\otimes \ket{0}\bra{0}\Big)
        \end{align}
        \begin{align}
            \sigma_{D'}^{c_2}= \frac{1}{2}&\Big( (1-\alpha) \ket{0}\bra{0}\otimes \ket{0}\bra{0}+\alpha \ket{+}\bra{+}\otimes \ket{0}\bra{0}\nonumber \\ &+\alpha \ket{+y}\bra{+y}\otimes \ket{1}\bra{1}+ (1-\alpha) \ket{1}\bra{1}\otimes \ket{1}\bra{1}\Big)
        \end{align}
        Therefore,
        \begin{equation}
            \Lambda_{\mathcal{C}}(\alpha)\leq -\ln \inf_{s\in (0,1)} \Tr((\sigma_{D}^{c_1})^s (\sigma_{D'}^{c_2})^{1-s})
        \end{equation}
        For the moment, we focus on the term $\Tr((\sigma_{D}^{c_1})^s (\sigma_{D'}^{c_2})^{1-s})$,
        \begin{align}
            &\Tr((\sigma_{D}^{c_1})^s (\sigma_{D'}^{c_2})^{1-s}) = \frac{1}{2} \Tr \Big [\big( (1-\alpha) \ket{0}\bra{0}+\alpha \ket{+y}\bra{+y}\big)^s \big((1-\alpha) \ket{0}\bra{0}+\alpha \ket{+}\bra{+} \big)^{1-s} \Big] \nonumber \\ & \hspace{1cm}+ \frac{1}{2}\Tr \Big [\big( \alpha \ket{+}\bra{+}+(1-\alpha) \ket{1}\bra{1}\big)^s \big(\alpha \ket{+y}\bra{+y}+(1-\alpha) \ket{1}\bra{1} \big)^{1-s} \Big] 
        \end{align}
        Consider each of the two terms separately,
        \begin{equation}
            f(s):=\Tr \Big [\big( (1-\alpha) \ket{0}\bra{0}+\alpha \ket{+y}\bra{+y}\big)^s \big((1-\alpha) \ket{0}\bra{0}+\alpha \ket{+}\bra{+} \big)^{1-s} \Big]
        \end{equation}
        \begin{equation}
            g(s):=\Tr \Big [\big( \alpha \ket{+}\bra{+}+(1-\alpha) \ket{1}\bra{1}\big)^s \big(\alpha \ket{+y}\bra{+y}+(1-\alpha) \ket{1}\bra{1} \big)^{1-s} \Big]
        \end{equation}
        We first show that both terms are equal. Using the cyclic property of the trace and the identity $UA^sU^\dagger=(UAU^\dagger)^s$ for any unitary $U$,
        \begin{align}
            f(s):&=\Tr \Big [\big( (1-\alpha) U\ket{0}\bra{0}U^{\dagger}+\alpha U\ket{+y}\bra{+y}U^\dagger\big)^s \big((1-\alpha) U\ket{0}\bra{0}U^\dagger+\alpha U\ket{+}\bra{+} U^\dagger \big)^{1-s} \Big] 
        \end{align}
        Taking $U=(X+Y)/\sqrt{2}$, we obtain $f(s)=g(s)$. This follows from
        \begin{equation}
            U\ket{0}=\frac{(1+i)}{\sqrt{2}} \ket{1} \text{, }U\ket{+}=\frac{(1-i)}{\sqrt{2}} \ket{+y} \text{ and } U\ket{+y}=\frac{(1+i)}{\sqrt{2}} \ket{+}
        \end{equation}
        Now, we show that $f(s)=f(1-s)$ for all $s\in (0,1)$. Let 
        \begin{equation}
            A= (1-\alpha) \ket{0}\bra{0}+\alpha \ket{+y}\bra{+y}
        \end{equation}
        and
        \begin{equation}
            B= (1-\alpha) \ket{0}\bra{0}+\alpha \ket{+}\bra{+}
        \end{equation}
        Then, using that $\Tr(M)=\Tr(M^T)$, and $B=B^T$,
        \begin{align}
            f(s)&=\Tr((A^T)^s B^{1-s}) \nonumber \\ &=\Tr \Big [\big( (1-\alpha) \ket{0}\bra{0}+\alpha \ket{-y}\bra{-y}\big)^s \big((1-\alpha) \ket{0}\bra{0}+\alpha \ket{+}\bra{+} \big)^{1-s} \Big] \nonumber \\ &=\Tr \Big [S^\dagger S\big( (1-\alpha)  \ket{0}\bra{0}+\alpha \ket{-y}\bra{-y}\big)^s S^\dagger  S \big((1-\alpha) \ket{0}\bra{0}+\alpha \ket{+}\bra{+} \big)^{1-s} \Big] \nonumber \\ &= \Tr \Big [\big( (1-\alpha)  \ket{0}\bra{0}+\alpha \ket{+}\bra{+}\big)^s  \big((1-\alpha) \ket{0}\bra{0}+\alpha \ket{+y}\bra{+y} \big)^{1-s} \Big] \nonumber \\ &= f(1-s)
        \end{align}
        where $S=\mathrm{diag}([1,j])$ is the phase gate. Therefore, since any function of the form $s\rightarrow\Tr(\rho^s \sigma^{1-s})$ is convex for every pair of quantum states \cite{audenaert2007discriminating}, we have $\inf_{s\in (0,1)} f(s)=f(1/2)$. Consequently,
        \begin{align}\label{decomposition}
            \inf_{s\in (0,1)}\Tr((\sigma_{D}^{c_1})^s (\sigma_{D'}^{c_2})^{1-s})&=\Tr(\sqrt{A} \sqrt{B})\nonumber \\ &= \sum_{i,j\in \{-,+\}} \sqrt{\lambda_i(A) \lambda_j (B)} |\bra{v_i(A)}\ket{v_j(B)}|^2
        \end{align}
        To compute the eigenvalues and eigenvectors, we use the fact that for any state
        \begin{equation}
            \sigma=\frac{1}{2}\left( I+ r_x X+r_y Y+r_z Z \right) 
        \end{equation}
        its eigenvalues are given by $\lambda_{\pm}(\sigma)= \frac{1}{2} \left( 1\pm\|r\|_2 \right)$, where $r=[r_x,r_y,r_z]^T$, and its eigenvectors satisfy
        \begin{equation}
            \ket{v_{\pm}(\sigma)}\bra{v_{\pm}(\sigma)}=\frac{1}{2}\left( I \pm \frac{r_x X+r_y Y+r_z Z}{\|r\|_2} \right) 
        \end{equation}
        Decomposing $A$ and $B$ in the Pauli basis, we obtain $r(A)=[0,\alpha, 1-\alpha]^T$ and $r(B)=[\alpha,0,1-\alpha]^T$. Thus, both matrices have the same eigenvalues, given by $\lambda_{\pm}(A)=\lambda_{\pm}(B)= \frac{1}{2} \left( 1\pm \sqrt{\alpha^2+(1-\alpha)^2} \right)$. For simplicity, we use $w=\sqrt{\alpha^2+(1-\alpha)^2}$. Next, to compute the inner products
        \begin{align}
           |\bra{v_i(A)}\ket{v_j(B)}|^2 &=\Tr(\ket{v_{i}(A)}\bra{v_{i}(A)}\ket{v_{i}(B)}\bra{v_{i}(B)}) \nonumber \\ &=\frac{1}{2} \left( 1 + \frac{r(A)^T r(B)}{w^2}\right) \nonumber \\ &= \left\{\begin{matrix}
 \frac{1}{2}\left( 1+ \frac{(1-\alpha)^2}{w^2}\right) \text{if }i=j\\
 \frac{1}{2}\left( 1- \frac{(1-\alpha)^2}{w^2}\right) \text{if }i\neq j 
\end{matrix}\right. 
        \end{align}
        Finally, substituting into \eqref{decomposition},
        \begin{align}
            \inf_{s\in (0,1)}\Tr((\sigma_{D}^{c_1})^s (\sigma_{D'}^{c_2})^{1-s})&= \frac{1}{2}\left( 1+\frac{(1-\alpha)^2}{w^2}\right)+ \frac{\sqrt{1-w^2}}{2}\left( 1-\frac{(1-\alpha)^2}{w^2}\right)\nonumber \\ &=  \frac{1}{2}\left( 1+ \sqrt{1-w^2}+ \left( 1- \sqrt{1-w^2}\right)\frac{(1-\alpha)^2}{w^2} \right) \nonumber \\&= \frac{1}{2}\left( 1+ \sqrt{1-w^2}+ \frac{(1-\alpha)^2}{1+\sqrt{1-w^2}} \right) \nonumber \\&= \frac{1}{2}\left(  \frac{(1+ \sqrt{1-w^2})^2+(1-\alpha)^2}{1+\sqrt{1-w^2}} \right)\nonumber \\&= \frac{1}{2}\left(  \frac{ 2+2\sqrt{1-w^2}+(1-w^2)+\alpha^2-2\alpha}{1+\sqrt{1-w^2}} \right)\nonumber \\&= 1- \frac{1}{2}\left(  \frac{\alpha^2}{1+\sqrt{2\alpha (1-\alpha)}} \right) \nonumber \\ & \geq 1-\frac{\alpha^2}{2}
        \end{align}
        Consequently,
        \begin{equation}
            \Lambda_\mathcal{C}(\alpha)\leq - \ln \left( 1-\frac{\alpha^2}{2} \right)  \leq  \alpha^2
        \end{equation}
        where the second inequality uses $\ln (1+x) \geq \frac{x}{1+x}$ for $x>-1$ \cite{topsoe2007some}. 

        We now prove that this concept class is PAC-learnable using quantum examples. To this end, we first introduce some notation. Let
        \begin{align}
             &\alpha=\frac{\mathbb{P}_D(2)}{\mathbb{P}_D(2)+\mathbb{P}_D(4)} \hspace{2cm} \beta=\frac{\mathbb{P}_D(3)}{\mathbb{P}_D(3)+\mathbb{P}_D(4)}  \nonumber \\ & \tilde{\alpha}=\frac{\mathbb{P}_D(3)}{\mathbb{P}_D(1)+\mathbb{P}_D(3)} \hspace{2cm} \tilde{\beta}=\frac{\mathbb{P}_D(2)}{\mathbb{P}_D(1)+\mathbb{P}_D(2)}
        \end{align}
        Furthermore, let $\sigma_{D}^{c_1}(i)$, for $i\in{0,1}$, denote the density matrices obtained by conditioning on observing label $i$ and tracing out the corresponding subsystem, i.e.,
        \begin{equation}
            \sigma_{D}^{c_1}= \Big(\mathbb{P}_D(1)+\mathbb{P}_D(3) \Big) \sigma_{D}^{c_1}(0) \otimes \ket{0}\bra{0} +\Big(\mathbb{P}_D(2)+\mathbb{P}_D(4) \Big) \sigma_{D}^{c_1}(1) \otimes \ket{1}\bra{1}
        \end{equation}
        These states satisfy the following properties:        
        \begin{align}
            &\Tr(\sigma_D^{c_1}(1) X)= \alpha \hspace{0.6cm} \Tr(\sigma_D^{c_1}(1) Y)= 0 \hspace{0.6cm} \Tr(\sigma_D^{c_1}(0) X)= 0 \hspace{0.6cm} \Tr(\sigma_D^{c_1}(0) Y)= \tilde{\alpha} \nonumber \\&\Tr(\sigma_D^{c_2}(1) X)= 0 \hspace{0.6cm} \Tr(\sigma_D^{c_2}(1) Y)= \beta \hspace{0.6cm} \Tr(\sigma_D^{c_2}(0) X)= \tilde{\beta}  \hspace{0.6cm} \Tr(\sigma_D^{c_2}(0) Y)=0
        \end{align}
        We can now prove that the concept class is learnable. The learning algorithm is defined as follows:
        Using,
        \begin{equation}
            t= O\left(\frac{1}{\epsilon^3}\ln \frac{1}{\delta} \right)
        \end{equation}
        examples, the algorithm estimates $\hat{X}_1$, $\hat{Y}_1$, $\hat{X}_0$, and $\hat{Y}_0$, where $\hat{A}_i$ denotes the estimated expectation value of the observable $A\in\{X,Y\}$ on the reduced density matrix associated with label $i$. If the number of copies with label $i$ is smaller than $10^5\ln(2/\delta)/\epsilon^2$ or if $|\hat{X}_i-\hat{Y}_i|<\epsilon(1/2-1/50)$, then label $i$ is considered inconsistent. If label $1$ is consistent, the decision is made as follows: if $\hat{X}_1>\hat{Y}_1$, output $c_1$, otherwise, output $c_2$. If label $1$ is inconsistent, we check whether label $0$ is consistent. In this case, the decision is made as follows: if $\hat{X}_0<\hat{Y}_0$, output $c_1$, otherwise, output $c_2$. If neither label is consistent, a random answer is returned.

        The algorithm always outputs either $c_1$ or $c_2$. Therefore, without loss of generality, it suffices to consider the case where $\mathrm{dist}_{D}(c_1,c_2)>\epsilon$. Consequently, either $\mathbb{P}_D(2)> \epsilon/2$ or $\mathbb{P}_D(3)> \epsilon/2$. Consider, for instance, the case where $\mathbb{P}_D(2)> \epsilon/2$. Then, $\alpha> \epsilon/2$ and $\tilde{\beta}> \epsilon/2$. Additionally, if the ground-truth concept is $c_1$, the expected number of examples with label $1$ is $\frac{\epsilon}{2}t$. Using Hoeffding's inequality, this implies that the error of each estimate (using half of the copies for each estimate) is bounded by
        \begin{equation}
            \mu=\sqrt{\frac{8}{\epsilon t} \ln \frac{2}{\delta}}=O\left(\epsilon\right)
        \end{equation}
        with probability at least $1-\delta$. For a sufficiently large constant, $\hat{X}_1>\frac{\epsilon}{2}\pm \frac{\epsilon}{100}$ and $\hat{Y}_1=0\pm \frac{\epsilon}{100}$ with probability at least $1-\delta$. Hence, label $1$ is consistent in this case. Similarly, if the ground-truth concept is $c_2$, we obtain $\hat{X}_0>\frac{\epsilon}{2}\pm \frac{\epsilon}{100}$ and $\hat{Y}_0=0\pm \frac{\epsilon}{100}$, which implies that label $0$ is consistent. Therefore, if $\mathbb{P}_D(2)>\epsilon/2$, independently of which concept is the ground-truth, at least one label is consistent, and the algorithm makes the correct decision with high probability. The case $\mathbb{P}_D(3)>\epsilon/2$ is completely analogous.

    \end{proof}

    \section{Proof of Theorem \ref{proposition_sufficient}} \label{proof_proposition}

     In this appendix, we provide the proof of Theorem \ref{proposition_sufficient}. We first introduce an auxiliary result from \cite{chefles1998unambiguous}.

    \begin{lemma}\label{lemma_unambiguous_discrimination} 
        Let $\{\ket{\psi_i}\}_{i=1}^N$ be a set of linearly independent quantum states. Then, there exists an unambiguous measurement with the same inconclusive probability for each quantum state, satisfying $\mathbb{P}(\mathcal{I}|s)=1-\lambda_{\min}(G)$, where $G_{i,j}=\bra{\psi_i}\ket{\psi_j}$. Furthermore, among all measurements satisfying the condition that $\mathbb{P}(\mathcal{I}|s)$ is constant for all $s\in[N]$, the inconclusive probability $1-\lambda_{\min}(G)$ is optimal.
    \end{lemma}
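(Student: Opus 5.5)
The statement to prove is Lemma~\ref{lemma_unambiguous_discrimination}, the result attributed to Chefles. I would work in the span $\mathcal{S}$ of the states and use the reciprocal (dual) basis. Since the $\ket{\psi_i}$ are linearly independent, $G$ is positive definite. Hence there are vectors $\ket{\tilde\psi_j}\in\mathcal{S}$ with $\bra{\tilde\psi_j}\ket{\psi_i}=\delta_{ij}$, namely
\begin{equation}
\ket{\tilde\psi_j}=\sum_k (G^{-1})_{kj}\ket{\psi_k}.
\end{equation}
Their own Gram matrix is $\bra{\tilde\psi_j}\ket{\tilde\psi_k}=(G^{-1})_{jk}$. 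Any unambiguous POVM element identifying state $i$ must annihilate all $\ket{\psi_j}$ with $j\neq i$. Restricted to $\mathcal{S}$, this forces it to be proportional to $\ket{\tilde\psi_i}\bra{\tilde\psi_i}$; components outside $\mathcal{S}$ are irrelevant. So the general unambiguous measurement has the form $E_i=\gamma_i\ket{\tilde\psi_i}\bra{\tilde\psi_i}$ with $\gamma_i\ge 0$, together with $E_{\mathcal I}=I-\sum_i E_i\succeq 0$. The success probability on input $i$ is then $\bra{\psi_i}E_i\ket{\psi_i}=\gamma_i$, so $\mathbb{P}(\mathcal{I}\mid i)=1-\gamma_i$.

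Equal inconclusive probabilities means $\gamma_i=\gamma$ for all $i$. The problem therefore reduces to finding the largest $\gamma$ such that
\begin{equation}
\gamma\sum_i\ket{\tilde\psi_i}\bra{\tilde\psi_i}\preceq I .
\end{equation}
It suffices to check this on $\mathcal{S}$, since the operator vanishes on $\mathcal{S}^\perp$.

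Let $M=\sum_i\ket{\tilde\psi_i}\bra{\tilde\psi_i}$. Its nonzero spectrum coincides with that of the Gram matrix of the $\ket{\tilde\psi_i}$, because $M=TT^\dagger$ and $T^\dagger T=G^{-1}$, where $T$ maps $e_i\mapsto\ket{\tilde\psi_i}$. Hence $\|M\|=\lambda_{\max}(G^{-1})=1/\lambda_{\min}(G)$. The condition $\gamma M\preceq I$ then holds exactly when $\gamma\le\lambda_{\min}(G)$. Choosing $\gamma=\lambda_{\min}(G)$ gives a valid unambiguous measurement with $\mathbb{P}(\mathcal I\mid s)=1-\lambda_{\min}(G)$ for every $s$. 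This proves existence and, by the same inequality, optimality among measurements with constant inconclusive probability.

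The main obstacle is justifying rigorously that every unambiguous POVM element is of rank-one form along $\ket{\tilde\psi_i}$. An element $E_i\succeq0$ with $\bra{\psi_j}E_i\ket{\psi_j}=0$ for $j\ne i$ satisfies $E_i\ket{\psi_j}=0$, by positivity. Compressing to $\mathcal{S}$, $P_{\mathcal S}E_iP_{\mathcal S}$ is supported on the one-dimensional orthocomplement of $\mathrm{span}\{\ket{\psi_j}\}_{j\ne i}$ within $\mathcal S$, and that orthocomplement is spanned by $\ket{\tilde\psi_i}$. Off-$\mathcal S$ parts of the $E_i$ do not change any probabilities. For the optimality bound, the constraint $\sum_i P_{\mathcal S}E_iP_{\mathcal S}\preceq P_{\mathcal S}$ still holds after compression, so the argument above applies verbatim.
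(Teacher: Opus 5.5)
Your proof is correct and follows essentially the same route as the paper. Both arguments take unambiguous elements to be rank-one along the dual vectors $\ket{\phi_i}$ with $\bra{\phi_i}\ket{\psi_j}=\delta_{ij}$, use equal weights $p$, and reduce positivity of the inconclusive element to an eigenvalue condition on $G$. The paper cites Chefles for the rank-one form and turns $I-p\sum_i\ket{\phi_i}\bra{\phi_i}\succeq 0$ into $G-pI\succeq 0$ by congruence with $H=[\ket{\psi_1},\ldots,\ket{\psi_N}]$. You instead derive the rank-one form yourself, including the off-span components, and use $\|M\|=\lambda_{\max}(G^{-1})$, which has the minor advantage of making the attainability of $p=\lambda_{\min}(G)$ explicit rather than implicit.
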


    \begin{proof}
        As shown in \cite{chefles1998unambiguous}, any unambiguous measurement $\{\Lambda_0, \Lambda_1,\cdots, \Lambda_N\}$ must be of the form
        \begin{equation}
            \Lambda_i= p_i {\ket{\phi_i}}{\bra{\phi_i}}
        \end{equation}
        where $1-p_i$ is the probability of an inconclusive result when measuring the state $\ket{\psi_i}$, and the unnormalized vectors $\ket{\phi_i}$ satisfy
        \begin{equation}
            \bra{\phi_i}\ket{\psi_j}=\delta_{i,j}
        \end{equation}
        Using that $\Lambda_0 \succeq 0$, and $p_i=p$ for all $i\in [N]$,
        \begin{equation}
            0\preceq I-p \sum_{i=1}^{N} {\ket{\phi_i}}{\bra{\phi_i}}
        \end{equation}
        Next, we use the fact that if $A\succeq 0$, then $B^\dagger A B\succeq 0$ for any matrix $B$. In particular, the matrix $H=[\ket{\psi_1},\ket{\psi_2},\cdots, \ket{\psi_N}]$ is used. That is,
        \begin{equation}
            0\preceq H^\dagger H -p H^\dagger \left( \sum_{i=1}^{N} {\ket{\phi_i}}{\bra{\phi_i}} \right) H = G-p I 
        \end{equation}
        Therefore, the optimal value of $p$ is given by $\lambda_{\min}(G)$.
        
    \end{proof}

    We are now ready to prove Theorem \ref{proposition_sufficient}, restated here for completeness.

    \begin{theorem*}
        Let $\mathcal{X}:=\{\ket{\psi_i}\}_{i}$, where the states are linearly independent. Then any concept class $\mathcal{C}\subseteq 2^{\mathcal{X}}$ is PAC learnable from quantum examples. Moreover, its sample complexity is bounded by
        \begin{equation}
            t_{Q,\mathcal{C}}(\epsilon,\delta) \leq c_U \left(\frac{d+\ln\frac{1}{\delta}}{\epsilon \, \lambda_{\mathrm{min}}(G)}  \right) 
        \end{equation}
        where $G_{i,j}=\bra{\psi_i}\ket{\psi_j}$.
    \end{theorem*}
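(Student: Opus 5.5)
The approach is to reduce the quantum problem to a classical PAC problem with random erasures, using Lemma \ref{lemma_unambiguous_discrimination}. Apply the unambiguous measurement with uniform success probability $p=\lambda_{\min}(G)$ to each quantum example $(\ket{\phi_i},c(\ket{\phi_i}))$. Since the measurement is unambiguous, a conclusive outcome reveals exactly which $\ket{\psi_j}$ was drawn. Because the inconclusive probability $1-p$ is the same for every state, the event ``conclusive'' is independent of which state was drawn. Hence, conditioned on being conclusive, the identified index is distributed exactly according to $D$.

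Concretely, each example independently yields, with probability $p$, a classical labeled sample $(x_j,c(x_j))$ with $x_j\sim D$, and otherwise yields nothing. We then run an optimal classical PAC learner, achieving the upper bound in \eqref{sample_complexity_classical}, on the conclusive samples. This learner uses $m=c_U'\,(d+\ln\frac{2}{\delta})/\epsilon$ samples, succeeds with probability $1-\delta/2$, and its hypothesis is interpreted as a function on $\mathcal{X}$. If fewer than $m$ conclusive samples are obtained, we output an arbitrary hypothesis.

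It remains to show that $t=c\,(d+\ln\frac1\delta)/(\epsilon p)$ quantum examples give at least $m$ conclusive outcomes with probability at least $1-\delta/2$.

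1. The number of conclusive outcomes $K$ is distributed as $\mathrm{Bin}(t,p)$, with $\mathbb{E}K=tp$.
2. Choose $t\ge 2m/p + 8\ln(2/\delta)/p$.
3. A multiplicative Chernoff bound \cite{mitzenmacher2017probability}, $\mathbb{P}(K< tp/2)\le e^{-tp/8}\le\delta/2$, then gives $K\ge m$ with the required probability.
4. Because conclusiveness is independent of the sample, the first $m$ conclusive samples are i.i.d.\ from $D$ even after conditioning on $K\ge m$. Thus the classical guarantee applies.
5. A union bound gives overall failure probability at most $\delta$.

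Since $1/\epsilon\ge1$, the total is $t=O\big((d+\ln\frac1\delta)/(\epsilon\lambda_{\min}(G))\big)$, which is the claimed bound. Learnability follows immediately.

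The main subtle point is the independence step: the conditional distribution of conclusive samples must equal $D$. This is exactly why the equal-inconclusive-probability measurement of Lemma \ref{lemma_unambiguous_discrimination} is used, rather than the optimal unambiguous measurement, which would bias the sample distribution.

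A second point is that $\mathcal{X}$ may be infinite, as a countable linearly independent set. In that case one needs $\lambda_{\min}(G)>0$ to be meaningful, and the Chefles construction then extends, with the dual vectors $\ket{\phi_i}$ existing as a bounded biorthogonal system. I would note this, or restrict to finite $\mathcal{X}$ where the lemma applies directly.
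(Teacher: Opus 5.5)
Your proposal is correct and follows the paper's proof. Both apply the equal-inconclusive-probability unambiguous measurement of Lemma~\ref{lemma_unambiguous_discrimination} to turn quantum examples into classical i.i.d.\ samples from $D$, run the optimal classical learner on them, and use a Chernoff bound plus a union bound to guarantee enough conclusive outcomes. Your explicit check that the conclusive samples remain i.i.d.\ from $D$ after conditioning on $K\ge m$, and your caveat that the statement needs $\mathcal{X}$ finite (or $\lambda_{\min}(G)>0$), fill in points the paper leaves implicit.
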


 \begin{proof}
        First, consider the auxiliary concept class $\mathcal{C}'$ given by:
        \begin{equation}
            \mathcal{C}'=\{c'\in 2^{[N]}: \exists\, c\in \mathcal{C} \text{ such that }c'(i)=c(\ket{\psi_i}) \,\forall i\in[N] \}
        \end{equation}
        Clearly, $\mathcal{C}'$ can be learned with
        \begin{equation}
            t=C\left(\frac{d+\ln\frac{1}{\delta}}{\epsilon}  \right) 
        \end{equation}
        queries to a classical oracle that returns pairs $(x,c'(x))$, where $x\in[N]$.
        
        Next, note that since the states are linearly independent, they can be unambiguously discriminated \cite{chefles1998unambiguous}. In particular, we are interested in measurements for which the probability of an inconclusive result is the same for all states. This condition allows us to generate classical pairs $(x,c'(x))$ by applying an unambiguous discrimination measurement to the samples $(\ket{\phi},c(\ket{\phi}))$, ensuring that the same distribution is obtained in both cases. As shown in Lemma \ref{lemma_unambiguous_discrimination}, the optimal inconclusive probability for a measurement with the same inconclusive probability for all states is given by $1-\lambda_{\min}(G)$, where $G_{i,j}=\bra{\psi_i}\ket{\psi_j}$.

        With the main idea established, we now define the algorithm:
        \begin{itemize}
            \item Using $t_{Q,\mathcal{C}}=t/(\lambda_{\mathrm{min}}(G)(1-\alpha))$ quantum pairs, we perform the optimal unambiguous discrimination measurement on each state, where $\alpha\in (0,1)$ is a constant that depends on $C$.
            \item If at least $t$ classical pairs have been obtained, then the classical algorithm that learns the concept class $\mathcal{C}'$ is used. Otherwise, a random concept is output.
        \end{itemize}
        Using the Chernoff bound (inequality 4.5 from \cite{mitzenmacher2017probability}), we can bound the probability that fewer than $t$ classical copies are generated:
        \begin{align}
            \mathbb{P} \left( \sum_{i=1}^{t_{Q,\mathcal{C}}} X_i < t \right) &\leq e^{-\frac{\alpha^2 t}{2(1-\alpha)}} \nonumber \\ &=e^{-\frac{C\alpha^2}{2(1-\alpha)} \left(\frac{d+\ln\frac{1}{\delta}}{\epsilon} \right)} \nonumber \\ & \leq e^{-\frac{C\alpha^2}{2(1-\alpha)} \left(\ln\frac{1}{\delta} \right)} = \delta^{\frac{C\alpha^2}{2(1-\alpha)}} \leq \delta
        \end{align}
        where in the last inequality we use that for any $C>0$, there exists a value of $\alpha \in (0,1)$ such that $\frac{C\alpha^2}{2(1-\alpha)}>1$.

        Hence, using 
        \begin{equation}
            t_{Q,\mathcal{C}} = \frac{t}{\lambda_{\mathrm{min}}(G)(1-\alpha)} = \frac{C}{1-\alpha} \left(\frac{d+\ln\frac{1}{\delta}}{\epsilon \, \lambda_{\mathrm{min}}(G)}  \right) 
        \end{equation}
        the algorithm outputs a hypothesis $h$ with error at most $\epsilon$ with probability at least $1-2\delta$.
        
    \end{proof}

    \section{Proof of $\lambda_{\min}(G) \leq 1-\max_{i,j: i\neq j} |\bra{\psi_i}\ket{\psi_j}|$ } \label{section_lambda}

    In this section, we prove the inequality $\lambda_{\min}(G) \leq 1 - \max_{i\neq j} |\bra{\psi_i}\ket{\psi_j}|$. To do this, we use Lemma \ref{lemma_unambiguous_discrimination} and a new lemma presented next.

    \begin{lemma}\label{auxiliary_result_lemma}
         Let $\{\ket{\psi_i}\}_{i=1}^N$ be a set of linearly independent quantum states, and let $\mathbb{P}_{\mathcal{I}}(\{\ket{\psi_i}\}_{i=1}^N)$ denote the optimal probability of an inconclusive result among methods satisfying the condition that $\mathbb{P}(\mathcal{I}|s)$ is constant. Then, for any subset $\{\ket{\psi_i}\}_{i\in S}\subseteq \{\ket{\psi_i}\}_{i=1}^N$,
        \begin{equation}
            \mathbb{P}_{\mathcal{I}}(\{\ket{\psi_i}\}_{i\in S}) \leq \mathbb{P}_{\mathcal{I}}(\{\ket{\psi_i}\}_{i=1}^N)
        \end{equation}
         
    \end{lemma}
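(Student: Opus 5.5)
By Lemma \ref{lemma_unambiguous_discrimination}, the optimal constant inconclusive probability equals $1-\lambda_{\min}$ of the Gram matrix of the states involved. The claim is therefore equivalent to
\begin{equation}
    1-\lambda_{\min}(G_S)\leq 1-\lambda_{\min}(G), \qquad \text{i.e.}\qquad \lambda_{\min}(G_S)\geq \lambda_{\min}(G),
\end{equation}
where $G_S$ is the principal submatrix of $G$ indexed by $S$. The plan is to prove this eigenvalue inequality in two ways: directly by linear algebra, and operationally via measurements. The operational version is closer to the spirit of the lemma.

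For the direct argument, note that $G$ is Hermitian and positive definite, since the states are linearly independent. By the Rayleigh quotient characterization,
\begin{equation}
    \lambda_{\min}(G_S)=\min_{x\in\mathbb{C}^{|S|},\,\|x\|=1} x^\dagger G_S x .
\end{equation}
Padding any such $x$ with zeros outside $S$ gives a unit vector $\tilde x\in\mathbb{C}^N$ with $\tilde x^\dagger G\tilde x = x^\dagger G_S x$. Hence $\lambda_{\min}(G_S)\geq\min_{\|y\|=1}y^\dagger G y=\lambda_{\min}(G)$. This is also the easy direction of Cauchy interlacing. Combined with Lemma \ref{lemma_unambiguous_discrimination}, applied separately to the subset and to the full set (both are linearly independent), this gives the result.

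For the operational argument, take the optimal measurement $\{\Lambda_0,\Lambda_1,\dots,\Lambda_N\}$ for the full set, which has constant success probability $p$. Coarse-grain it for the subset by keeping $\Lambda_i$ for $i\in S$ and merging $\Lambda_0$ with all $\Lambda_j$, $j\notin S$, into a single inconclusive outcome. For $i\in S$ we have $\bra{\psi_i}\Lambda_j\ket{\psi_i}=p_j|\bra{\phi_j}\ket{\psi_i}|^2=0$ whenever $j\notin S$. So each state in $S$ keeps success probability $p$ and the measurement remains unambiguous, which shows that $\mathbb{P}_{\mathcal{I}}$ for the subset is at most that of the full set.

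The only point needing care is the last step of the operational argument: checking that the merged operator still makes the measurement unambiguous with equal inconclusive probabilities. The biorthogonality $\bra{\phi_j}\ket{\psi_i}=\delta_{ij}$ settles it. The Rayleigh quotient argument avoids this issue entirely, so I would present that one as the main proof.
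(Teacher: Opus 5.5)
Your proposal is correct, but your main argument runs in the opposite direction from the paper's.

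Your second, operational argument is the paper's proof. The paper just observes that an unambiguous measurement with constant inconclusive probability for the full set also works for any subset. Your coarse-graining, together with the check $\bra{\psi_i}\Lambda_j\ket{\psi_i}=0$ for $i\in S$, $j\notin S$, spells that sentence out. That check follows from unambiguity alone; you do not need the explicit form $\Lambda_j=p_j\ket{\phi_j}\bra{\phi_j}$.

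The Rayleigh-quotient argument you prefer inverts the paper's logic. In Appendix~\ref{section_lambda} the paper combines this lemma with Lemma~\ref{lemma_unambiguous_discrimination} precisely to obtain $\lambda_{\min}(G_S)\geq\lambda_{\min}(G)$. You prove that eigenvalue inequality directly and deduce the lemma from it. This is valid, and it shows the lemma is dispensable for the paper's purpose: zero-padding alone gives the inequality Appendix~\ref{section_lambda} needs, with no reference to measurements.

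The cost is that your route uses Lemma~\ref{lemma_unambiguous_discrimination} in both directions:
\begin{itemize}
\item achievability of $1-\lambda_{\min}(G_S)$ for the subset;
\item optimality of $1-\lambda_{\min}(G)$ for the full set.
\end{itemize}
The paper's proof of that lemma only writes out the necessity direction $p\leq\lambda_{\min}(G)$. Achievability is true: choosing the reciprocal vectors $\ket{\phi_i}$ inside the span of the states, $G\succeq pI$ is also sufficient for $\Lambda_0\succeq 0$. It is left implicit there, however.

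The operational route needs nothing from Lemma~\ref{lemma_unambiguous_discrimination}; it follows directly from the definition of $\mathbb{P}_{\mathcal{I}}$. That makes it the more self-contained proof of the lemma as stated.
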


    \begin{proof}
        The inequality follows from the fact that any method that unambiguously discriminates the set of states $\{\ket{\psi_i}\}_{i=1}^N$ with a constant probability of an inconclusive result is also a method that unambiguously discriminates any subset of $\{\ket{\psi_i}\}_{i=1}^N$ with a constant probability.
    \end{proof}

    Therefore, combining Lemma \ref{lemma_unambiguous_discrimination} and Lemma \ref{auxiliary_result_lemma}, we have that
    \begin{equation}
        1-\lambda_{\min}( G(\{\ket{\psi_i}\}_{i\in S})) \leq 1-\lambda_{\min}( G(\{\ket{\psi_i}\}_{i=1}^N))
    \end{equation}
    That is,
    \begin{equation}
        \lambda_{\min}( G(\{\ket{\psi_i}\}_{i\in S})) \geq \lambda_{\min}( G(\{\ket{\psi_i}\}_{i=1}^N))
    \end{equation}
    Finally, using the fact that for $S=\{i,j\}$, $\lambda_{\min}( G(\{\ket{\psi_i}\}_{i\in S}))= 1-|\bra{\psi_i}\ket{\psi_j}|$, it follows that,
    \begin{equation}
        1- |\bra{\psi_i}\ket{\psi_j}| \geq \lambda_{\min}( G(\{\ket{\psi_i}\}_{i=1}^N))
    \end{equation}
    Finally, since this holds for any set ${i,j}$, we can take the set that minimizes this upper bound.
\end{appendices}


\end{document}